\documentclass[12pt,reqno]{article}

\usepackage[english]{babel}
\usepackage{amsfonts, amssymb, amsmath, amsthm} 

\usepackage{eucal}
\usepackage{nicefrac}
\usepackage{color}
\usepackage{mathrsfs}
\usepackage{graphicx}

\usepackage[letterpaper,top=2cm,bottom=2cm,left=3cm,right=3cm,marginparwidth=1.75cm]{geometry}

\usepackage[utf8]{inputenc}

\usepackage{mathtools}

\textwidth=16cm \textheight=23cm
\parskip=\medskipamount
\parindent=0pt
\topmargin=-1cm \oddsidemargin=0cm

\newtheorem{definition}{\noindent \noindent {\bf
Definition}}[section]

\newtheorem{prop}{{\bf Proposition}}[section]
\newtheorem{corollary}{{\bf Corollary}}[section]
\newtheorem{remark}{{\bf Remark}}[section]
\newtheorem{example}{{\bf Example}}[section]

\usepackage{hyperref}
\hypersetup{colorlinks=true, allcolors=blue}
\newcommand{\pder}[2]{ \frac{\partial #1}{\partial #2} }
\newcommand{\R}{\mathbb{R}}

\title{A geometric description of some thermodynamical systems}
\author{\sffamily
Manuel de León$^{1,2}$, \thanks{mdeleon@icmat.es\qquad\qquad ORCID: 0000-0002-8028-2348}\, 
Jaime Bajo$^{3}$, \thanks{jaime.bajo@estudiantes.uva.es\qquad\qquad ORCID: 0000-0001-8665-7713}\,
\\[1ex]
\\[0.1ex]
\normalsize\itshape\sffamily
$^1$Instituto de Ciencias Matemáticas (CSIC), Madrid, Spain
\\[0.1ex]
\normalsize\itshape\sffamily
$^2$Real Academia de Ciencias, Madrid, Spain
\\[0.1ex]
\normalsize\itshape\sffamily
$^3$Universidad de Valladolid, Spain
\\[0.1ex]}

\begin{document}
\maketitle

\begin{abstract}
In this paper we show how almost cosymplectic structures are a natural framework to study thermodynamical systems. Indeed, we are able to obtain the same evolution equations obtained previously by Gay-Balmaz and Yoshimura \cite{hiro} using variational arguments. The proposed geometric description allows us to apply geometrical tools to discuss reduction by symmetries, the Hamilton-Jacobi equation or discretization of these systems.
 
\end{abstract}
\tableofcontents
\section{Introduction}

Just as symplectic geometry is the natural setting for developing time-independent mechanics, so is cosymplectic geometry in the case of time dependence \cite{albert1989theoreme,cantrijn1992gradient,de2011methods,libermann2012symplectic}. On the other hand, contact geometry is the arena for studying systems with dissipation (in more precise terms, those whose Lagrangian depends on the action itself) \cite{de2019contact}.

Traditionally, the geometry of equilibrium thermodynamics has been mainly studied via contact geometry in terms of contact manifolds \cite{Bravetti2018a,simoes2020contact,elotro} (see also \cite{mrugala1,mrugala2,lacomba1,lacomba2}); in this geometric setting, thermodynamic properties are encoded by Legendre submanifolds of the thermodynamic phase space. However, in this paper we will take a different view, and see how different thermodynamic systems can be described by (almost) cosymplectic structures, that, in some sense, could be considered as natural extensions of contact geometry, even if they exhibit very different features. 
In this way, we reobtain the evolution equations obtained in a recent survey by Gay-Balmaz and Yoshimura \cite{hiro}. In that survey, the authors have obtained these equations using a variational approach of nonequilibrium thermodynamics for the finite-dimensional case of discrete systems, as well as for the infinite-dimensional case of continuum systems.

Our plan is to follow the same scheme that in \cite{hiro}. So, we first consider adiabatically closed simple systems; then, we consider adiabatically closed non-simple systems; and, further, we consider the geometric formulation for open systems. In this approach, we gradually increase the level of complexity. Indeed, we start by studying an adiabatically closed system that has only one entropy variable or, equivalently, one temperature. Such systems, called simple systems, may involve the irreversible processes of mechanical friction and internal matter transfer. A more general class are adiabatically closed thermodynamic systems with several entropy variables, which may also involve the irreversible process of heat conduction. Another further step is to consider open thermodynamic systems, which can exchange heat and matter with the exterior. 

The paper is structured as follows. In Section 2 we recall the main notions and results concerning the coysmplectic formulation of time-dependent mechanics, in the Lagrangian and the Hamiltonian descriptions, both related by the Legendre transformation. We also include the notion of semibasic form, which is the geometric notion corresponding to external forces acting on the system. Section 3 is devoted to introduce some extensions and generalizations of almost cosymplectic structures which will be used in the rest of the paper. So, in Section 4 we apply the previous definitions to several cases of thermodynamical systems, providing a geometric setting for Adiabatically Closed Simple Thermodynamic Systems, Systems with Internal Mass Transfer, Adiabatically Closed Non-Simple Thermodynamic Systems and Open Simple Thermodynamic Systems. Finally, in Section 5 we present some conclusions as well as some future lines of research.

\section{Dynamics on cosymplectic geometry}

\subsection{Cosymplectic Hamiltonian formalism}

A cosymplectic structure on an odd-dimensional manifold $M$ is a pair $(\Omega, \eta)$ where
$\Omega$ is a closed 2-form, $\eta$ is a closed 1-form, and
$\Omega^n \wedge\eta \not= 0$, where $M$ has dimension $2n+1$.
$(M, \Omega, \eta)$ will be called a cosymplectic manifold.

There is a Darboux theorem for a cosymplectic manifold, that is,
there are local coordinates (called Darboux coordinates) $(q^i, p_i, z)$ around each point of $M$
such that
$$
\Omega = dq^i \wedge dp_i \; , \; \eta = dz
$$
There exists a unique vector field (called Reeb vector field) $\mathcal R$ such that
$$
i_{\mathcal R} \, \Omega = 0 \; , \; i_{\mathcal R}\, \eta = 1
$$
In Darboux coordinates we have
$$
\mathcal R = \frac{\partial}{\partial z}
$$

Let $H : M \longrightarrow \mathbb R$ be a Hamiltonian function, say $H = H(q^i, p_i, z)$.

Consider the vector bundle isomorphism
$$
\flat : TM \longrightarrow T^* M \; , \; \flat(v) = i_v \, \Omega + \eta (v) \, \eta
$$
and define the gradient of $H$ by
$$
\flat({\rm grad} \; H) = dH
$$
Then
\begin{equation}\label{hcosymp}
{\rm grad} \; H = \frac{\partial H}{\partial p_i} \frac{\partial}{\partial q^i} - 
\frac{\partial H}{\partial q^i} \frac{\partial}{\partial p_i} + \frac{\partial H}{\partial z} \, \frac{\partial}{\partial z}
\end{equation}

Next we can define two more vector fields:

\begin{itemize}

\item The Hamiltonian vector field 
$$
X_H = {\rm grad} \; H - \mathcal R (H) \mathcal R
$$

\item and the evolution vector field
$$
{\mathcal E}_H = X_H + {\mathcal R}
$$
\end{itemize}

From (\ref{hcosymp}) we obtain the local expression

\begin{equation}\label{hcosymp2}
{\mathcal E}_H = \frac{\partial H}{\partial p_i} \frac{\partial}{\partial q^i} - 
\frac{\partial H}{\partial q^i} \frac{\partial}{\partial p_i} +  \frac{\partial}{\partial z}
\end{equation}
Therefore, an integral curve $(q^i(t), p_i(t), z(t))$ of ${\mathcal E}_H$ satisfies the 
time-dependent Hamilton equations

\begin{eqnarray}\label{hsympl3}
\frac{dq^i}{dt} & = & \frac{\partial H}{\partial p_i} \\
\frac{dp_i}{dt} & = & - \frac{\partial H}{\partial q^i}\\
\frac{dz}{dt} & = & 1
\end{eqnarray}
and then $z=t+const$ so that both coordinates can be identified.

\subsubsection{Cosymplectic Hamiltonian formalism on extended cotangent bundles}

Now, we consider a time-dependent Hamiltonian function $H = H(q^i, p_i, t)$, that is, $H$ is a function defined on the so-called extended cotangent bundle $T^*Q \times \mathbb R$, where $Q$ is the configuration manifold.

On $T^*Q$ there exists a canonical 1-form $\theta_Q$ (the so-called Liouville form) given by
$$
\theta_{\alpha_q} (X) = \langle \alpha_q , T\pi_Q(X) \rangle ,
$$
where $\alpha_q \in T_q^*Q$, $X \in T_{\alpha_q}(T^*Q)$, and $\pi_Q : T^*Q \longrightarrow Q$ is the canonical projection.

In bundle coordinates $(q^i, p_i)$ on $T^*Q$, we have
$$
\theta_Q = p_i dq^i
$$
So, $\omega_Q = - d\theta_Q$ is a symplectic form with local expression
$$
\omega_Q = dq^i \wedge dp_i
$$

\begin{remark} {\rm The form $\omega_Q$ is called the canonical symplectic form on $T^*Q$. Indeed, Darboux theorem states that any symplectic form is locally equivalent to a canonical symplectic form.}
\end{remark}

Then, we can consider the extended cotangent bundle $T^*Q \times \mathbb R$, equipped with the 2-form $\omega_Q$ (that is, the pull-back via the canonical projection $\pi: T^*Q \times \mathbb R \longrightarrow T^*Q$) and the 1-form $dz$, where $z$ is the canonical coordinate in $\mathbb R$. Then, the pair $(\omega_Q, dz)$ is a cosymplectic structure on $T^*Q \times \mathbb R$.

Now, given a Hamiltonian function $H : T^*Q \times \mathbb R \longrightarrow \mathbb R$, we can develop the Hamiltonian formalism just repeating the notions of the previous section.

\subsection{Cosymplectic Lagrangian formalism}

We will recall here the geometric formalism for time-dependent Lagrangian systems.
In this case, we also have a Lagrangian $L : TQ \times \mathbb R \longrightarrow \mathbb R$, and we will consider the cosymplectic structure given by the pair
$(\Omega_L, dz)$, $z$ being a global coordinate in $\mathbb R$, and where:
$$
\Omega_L = - d \lambda_L 
$$
$\lambda_L$ being the 1-form on $TQ \times \mathbb R$ defined by
$$
\lambda_L = S^*(dL)
$$
Here, $S$ is the canonical vertical endomorphism (or almost tangent structure) defined on $TQ$ but considered now acting on $TQ \times \mathbb R$ in the obvious way. Recall that in bundle coordinates $(q^i, \dot{q}^i)$ on $TQ$ we have
$$
S = \frac{\partial}{\partial \dot{q}^i} \otimes dq^i 
$$
Therefore, we obtain
$$
\lambda_L = \frac{\partial L}{\partial \dot{q}^i} \, dq^i
$$
A Lagrangian $L$ is said to be regular if and only if the matrix
$$
(\frac{\partial^2L}{\partial \dot{q}^i \partial \dot{q}^j})
$$
is non-singular.

The Lagrangian energy $E_L$ is defined as
$$
E_L = \Delta(L) - L,
$$
where $\displaystyle{\Delta = \dot{q}^i \frac{\partial}{\partial \dot{q}^i}}$ is the Liouville vector field on $TQ$ (here, considered on $TQ \times \mathbb R$ in the obvious manner).

It is easy to check that, indeed, if $L$ is regular then
$$
\Omega_L^n \wedge  dz \not= 0,
$$
and, conversely. Again, we have a Reeb vector field 
$$
\mathcal R_L = \frac{\partial}{\partial z} - W^{ij} \frac{\partial^2 L}{\partial \dot{q}^j \partial z} \, \frac{\partial}{\partial \dot{q}^i}
$$

Consider now the following vector fields determined by means of the vector bundle isomorphism
\begin{eqnarray*}
&&\flat_L  :  T(TQ \times \mathbb R) \longrightarrow T^*(TQ \times \mathbb R)\\
&&\flat_L (v) = i_v \, \Omega_L + dz (v) \, dz
\end{eqnarray*}
say,
\begin{enumerate}
\item the gradient vector field
$$
{\rm grad} \; (E_L) = \sharp_L (dE_L)
$$
\item the Hamiltonian vector field
$$
X_{E_L} = {\rm grad} \; E_L - \mathcal R_L (E_L) \, \mathcal R_L
$$
\item and the evolution vector field
$$
{\mathcal E}_L = X_{E_L} + \mathcal R_L
$$
\end{enumerate}
where $\sharp_L = (\flat_L)^{-1}$ is the inverse of $\flat_L$.
(For the sake of simplicity, here and in the following, we denote
${\mathcal E}_L$ instead of ${\mathcal E}_{E_L}$).

The evolution vector field ${\mathcal E}_L$ is locally given by
\begin{equation}\label{cosylagr1}
{\mathcal E}_ L = \dot{q}^i \, \frac{\partial}{\partial q^i} + B^i \, \frac{\partial}{\partial \dot{q}^i} + \frac{\partial}{\partial z}
\end{equation}
where
\begin{equation}\label{cosylagr2}
B^i \, \frac{\partial}{\partial \dot{q}^i}(\frac{\partial L}{\partial \dot{q}^j}) + \dot{q}^i \, 
\frac{\partial}{\partial q^i}(\frac{\partial L}{\partial \dot{q}^j}) - \frac{\partial L}{\partial q^j} = 0
\end{equation}
Now, if $(q^i(t), \dot{q}^i(t), z(t))$ is an integral curve of ${\mathcal E}_L$ then it satisfies the 
usual Euler-Lagrange equations
\begin{equation}\label{cosylagr3}
\frac{d}{dt} (\frac{\partial L}{\partial \dot{q}^i}) - \frac{\partial L}{\partial q^i} = 0
\end{equation}
since $z = t + constant$. 

\subsection{The Legendre transformation}

Assume that $L : TQ \times \mathbb{R}$ is a time-dependent Lagrangian. Then, one can define the Legendre transformation
$$
Leg : TQ \times \mathbb{R} \longrightarrow T^*Q \times \mathbb{R}
$$
as the mapping given in local coordinates by
$$
Leg(q^i, \dot{q}^i, z) = (q^i, p_i, z) 
$$
where $\displaystyle{p_i= \frac{\partial L}{\partial \dot{q}^i}}$. Of course, one can give a global definition, independent of the chosen local coordinates (see \cite{de2011methods}).

One can easily prove that $Leg$ is a local diffeomorphism if and only if the Lagrangian $L$ is regular. In addition, $L$ is said to be hyperregular if $Leg$ is a global diffeomorphism.

In that case, if $E_L$ is the Lagrangian energy, one can define a Hamiltonian energy $H$ on $T^*Q \times \mathbb{R}$ by
$$
H=E_L\circ Leg^{-1}
$$

A simple computation shows that
$$
Leg^* \theta_Q = \lambda_L
$$
so that the Legendre transformation preserves the cosymplectic structures $(\Omega_L, dz)$ and $(\Omega_Q, dz)$; in other words, it is a cosymplectomorphism.

Therefore, due to the above relations between the energies, one can deduce that the gradient, Hamiltonian and evolution vector fields are related by the Legendre transformation.

\subsection{Forces and semibasic forms}

A force on a mechanical system with configuration manifold $Q$ is interpreted as a semibasic 1-form on the tangent bundle $TQ$, or, alternatively, on the cotangent bundle $T^*Q$. Let us recall that a semibasic form on $TQ$ (resp., on $T^*Q$) is a 1-form $\tilde{\alpha}$ (resp. $\alpha$) on $TQ$ (resp., on $T^*Q$) such that it vanishes acting on vertical vectors. So, the local representations are
\begin{eqnarray*}
\tilde{\alpha} &=& \tilde{\alpha}_i(q, \dot{q}) dq^i \\
(\hbox{resp.} \; 
\alpha& = & \alpha_i(q, p) dq^i )
\end{eqnarray*}

Similar notions can be considered for time-dependent Lagrangian and Hamiltonian systems. In this case, we consider semibasic forms dependent on time, that is, 1-forms on $TQ \times \mathbb R$ (resp. on $T^*Q \times \mathbb{R}$) such that it vanishes acting on vertical vectors with respect to the fibration $TQ \times \mathbb R \longrightarrow Q \times \mathbb R$ (resp., $T^*Q \times \mathbb R \longrightarrow Q \times \mathbb R$).

This means that the local expressions are
\begin{eqnarray*}
\tilde{\alpha} &=& \tilde{\alpha}_i(q, \dot{q}, z) dq^i \\
(\hbox{resp.} \; 
\alpha & = & \alpha_i(q, p, z) dq^i )
\end{eqnarray*}

The way to include external forces in the cosymplectic formulation of mechanics is just as follows:

For Lagrangian mechanics, we consider the equation
$$
\flat_L (X) = i_X \Omega_L + dz(X) dz = \tilde{\alpha}
$$
or, for Hamiltonian mechanics
$$
\flat (X) = i_X \omega_Q + dz(X) dz = \alpha
$$
Since the Legendre transformation preserves the fibers, one deduces that
$$
Leg^* \alpha = \tilde{\alpha}
$$

\section{Almost cosymplectic structures}

\subsection{Partially cosymplectic structures} 

\begin{definition}
An almost cosymplectic structure on a manifold $M$ is a pair $(\omega, \eta)$, where $\omega$ is a 2-form and $\eta$ a 1-form such that
$$
\omega^n \wedge \eta \not= 0.
$$
(Here, $2n+1$ is the dimension of $M$).
When $\omega$ and $\eta$ are both closed, the structure is called cosymplectic.  
\end{definition}

Along this paper, we will consider a particular kind of almost cosymplectic structures $(\omega, \eta)$, those whose 2-form $\omega$ is closed but whose 1-form $\eta$ is not. We call these structures as {\bf partially cosymplectic}.

\begin{example}
    Notice that if $(M,\eta)$ is a contact manifold \cite{de2019contact} of dimension $2n+1$, in particular, $(d\eta,\eta)$ define a partially cosymplectic structure on $M$.
\end{example}

Assume that $(\omega, \eta)$ is a partially cosymplectic manifold. We denote by
$$
K = \ker \eta
$$
Then, $K$ is a vector subbundle of the tangent bundle $TM$, even a symplectic vector subbundle when we consider the restriction of the 2-form $\omega$ to its fibers. However, considered as a distribution on $M$, it is not involutive. Indeed, if $X$ and $Y$ are vector fields in $K$ (we are assuming some abuse of language), we have
$$
d \eta (X, Y) = X(\eta(Y)) - Y (\eta(X)) - \eta([X,Y]) = - \eta([X,Y]).
$$
We also have in this context the notion of Reeb vector field.

\begin{prop}
Given a partially cosymplectic structure $(\omega, \eta)$ on $M$, there exists a unique vector field ${\cal R}$ such that
$$
i_{\cal R} \, \omega = 0 \qquad, \qquad i_{\cal R} \, \eta = 1.
$$

${\cal R}$ will be called the Reeb vector field.
\end{prop}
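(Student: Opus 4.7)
The plan is to mirror the construction of the Reeb vector field used in the cosymplectic case: introduce the bundle morphism
\[
\flat : TM \longrightarrow T^*M, \qquad \flat(v) = i_v\omega + \eta(v)\,\eta,
\]
show that $\flat$ is a vector bundle isomorphism from the sole hypothesis $\omega^n \wedge \eta \neq 0$, and then define $\mathcal{R} := \flat^{-1}(\eta)$. Uniqueness will then be automatic from the injectivity of $\flat$, and only the two defining identities $i_{\mathcal{R}}\omega = 0$ and $\eta(\mathcal{R}) = 1$ remain to be verified. Observe that the closedness of $\omega$ plays no role in this argument, so the statement is essentially a fibrewise linear algebra claim together with the smoothness of $\flat^{-1}$ (which is automatic from $\flat$ being smooth and a bundle isomorphism).

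For the bijectivity of $\flat$, since source and target bundles have equal rank it suffices to show fibrewise injectivity. The non-degeneracy condition $\omega^n \wedge \eta \neq 0$ is equivalent, pointwise, to the restriction of $\omega$ to the $2n$-dimensional hyperplane $K_p := \ker \eta_p$ being a symplectic form. If $\flat(v) = 0$, evaluating this 1-form at $v$ gives $0 = \omega(v,v) + \eta(v)^2 = \eta(v)^2$, so $v \in K_p$; the original equation then collapses to $i_v \omega = 0$, placing $v$ in the radical of $\omega|_{K_p}$, whence $v = 0$.

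With $\flat$ an isomorphism, set $\mathcal{R} := \flat^{-1}(\eta)$, so that $i_{\mathcal{R}}\omega + \eta(\mathcal{R})\,\eta = \eta$. Evaluating both sides on $\mathcal{R}$ yields $\eta(\mathcal{R})^2 = \eta(\mathcal{R})$, hence a priori $\eta(\mathcal{R}) \in \{0,1\}$. The spurious root $\eta(\mathcal{R}) = 0$ would place $\mathcal{R}$ in $K_p$ and force $i_{\mathcal{R}}\omega = \eta$; but then contracting the latter with any $w \in K_p$ gives $\omega(\mathcal{R},w) = \eta(w) = 0$, so $\mathcal{R}$ would belong to the radical of $\omega|_{K_p}$ and therefore vanish, contradicting $\eta \neq 0$. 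Hence $\eta(\mathcal{R}) = 1$ and $i_{\mathcal{R}}\omega = 0$. Uniqueness is immediate: any other $\mathcal{R}'$ with the same two properties satisfies $\flat(\mathcal{R}') = \eta = \flat(\mathcal{R})$, and $\flat$ is injective.

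I expect no serious technical difficulty; the argument is essentially identical to its cosymplectic counterpart. The only subtlety worth flagging is that the tautological quadratic $\eta(\mathcal{R})^2 = \eta(\mathcal{R})$ admits the extra root $0$, and excluding it requires invoking the non-degeneracy hypothesis a second time via the symplectic structure on $K$. This is the single place where one might be tempted to cut corners, so it is where I would be most careful in the write-up.
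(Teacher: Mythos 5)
Your proof is correct, but it follows a different route from the one in the paper. The paper treats $(M,\omega)$ as a presymplectic manifold of corank one, invokes the generalized Darboux theorem to write $\omega = dq^i\wedge dp_i$ locally, builds a local Reeb field $\frac{1}{\eta(\partial/\partial s)}\,\partial/\partial s$ on each chart, and glues these with a partition of unity (the affine conditions $i_{\mathcal R}\omega=0$, $\eta(\mathcal R)=1$ being preserved by convex combinations); uniqueness is then a separate linear-algebra argument extending $\mathcal R_1-\mathcal R_2$ to a basis and contracting with $\omega^n\wedge\eta$. You instead prove that $\flat(v)=i_v\omega+\eta(v)\eta$ is a bundle isomorphism and set $\mathcal R=\flat^{-1}(\eta)$, which delivers existence, smoothness and uniqueness in one stroke, and your handling of the spurious root $\eta(\mathcal R)=0$ is exactly the point that needs care and is done correctly. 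Two remarks on what each approach buys: your argument is precisely the one the paper itself uses for the order-$p$ generalization (Proposition 3.2 and its corollary, where $\mathcal R_k$ is defined via $\flat^{-1}(\eta_k)$), so you have in effect unified the two cases; and, as you note, your proof never uses $d\omega=0$, so it applies verbatim to arbitrary almost cosymplectic structures, whereas the paper's appeal to the generalized Darboux theorem for presymplectic forms does rely on the closedness (and constant rank) of $\omega$. The only cosmetic difference is that the paper's uniqueness argument does not pass through $\flat$, but yours is equivalent and shorter.
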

\begin{proof}
As $(M,\omega)$ is a pre-symplectic manifold of corank 1, according to the generalized Darboux Theorem \cite{de2011methods} (see also \cite{godbillon1969geometrie}) for each $x \in M$ there exists a coordinate neighborhood $U_x$ with local coordinates $(q^1,\cdots,q^r,p_1,\cdots,p_r,s)$ such that:
$$
\omega = dq^i\wedge dp_i
$$
In particular, $i_\pder{}{s}\omega=0$. Thus, it must be $\eta\left(\pder{}{s}\right)\neq 0$ for each point of $U_x$, since $\omega^n\wedge \eta \neq 0$. Setting $\mathcal{R}_x=\frac{1}{\eta\left(\pder{}{s}\right)}\pder{}{s}$, then $\mathcal{R}_x$ is a local Reeb vector field. 

Consider a partition of unity on $M$ subordinate to the atlas $\{U_x\}$, $\{(U_i,f_i)\}$. For each $i$ there exists $x_i$ such that $U_i\subseteq U_x$. Let $\mathcal{R}=\sum_i f_i\mathcal{R}_{x_i}$. Then $i_{\mathcal{R}}\omega=0$ and $i_{\mathcal{R}}\eta=1$.

Let $\mathcal{R}_1$, $\mathcal{R}_2$ be two Reeb vector fields. Consider $X=\mathcal{R}_1-\mathcal{R}_2$. If for some $x\in M$, $X(x)\neq 0$ then we can extend the tangent vector into a basis of $T_xM$, $\{X(x),e_1,\cdots,e_{2n}\}$. Then we would have that $(\omega^n\wedge\eta) (X(x),e_1,\cdots,e_{2n})=0$. Thus, it must be $X = 0$ and the Reeb vector must be unique.
\end{proof}
\begin{corollary} \label{CotangentDecomposition}
    Let $(\omega,\eta)$ be a partially cosymplectic structure over $M$. Consider $\gamma:TM\longrightarrow T^*M$ the morphism defined by $\gamma(X)=i_X\omega$ and let $H=Im(\gamma)$. Then we can express the cotangent bundle as the following Whitney sum:
    $$
    T^*M=H\oplus\langle\eta\rangle
    $$
\end{corollary}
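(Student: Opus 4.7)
The plan is to establish the decomposition fiberwise by combining a dimension count with a direct verification that $\eta$ never lies in the image of $\gamma$; since the Reeb vector field exists globally, the fiberwise decomposition will automatically assemble into a Whitney sum of subbundles.

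First I would identify the kernel of $\gamma$. The generalized Darboux theorem invoked in the preceding proof shows that $\omega$ is a presymplectic form of corank one, so the pointwise kernel of $X \mapsto i_X \omega$ is one-dimensional. Since $i_{\mathcal{R}}\omega = 0$ by definition of the Reeb vector field, $\ker \gamma = \langle \mathcal{R}\rangle$ as a line subbundle of $TM$. Therefore $H = \operatorname{Im}(\gamma)$ is a rank-$2n$ subbundle of $T^*M$, and $\langle \eta\rangle$ is a rank-$1$ subbundle (it is a genuine line subbundle since $\eta$ is nowhere vanishing: $\eta(\mathcal{R}) = 1$). The ranks therefore add up to $2n+1 = \operatorname{rank}(T^*M)$, so it suffices to show $H \cap \langle \eta\rangle = 0$ at every point.

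For the transversality step, suppose for contradiction that at some $x \in M$ one has $\lambda\, \eta_x = (i_X\omega)_x$ for some $X \in T_xM$ and some $\lambda \neq 0$. Evaluating both sides on the Reeb vector $\mathcal{R}_x$ yields
\[
\lambda = \lambda\, \eta_x(\mathcal{R}_x) = \omega_x(X,\mathcal{R}_x) = -(i_{\mathcal{R}}\omega)_x(X) = 0,
\]
which contradicts $\lambda \neq 0$. Hence $H_x \cap \langle \eta_x\rangle = 0$, and the rank count gives $T^*_xM = H_x \oplus \langle \eta_x\rangle$ fiberwise.

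Finally, because $\gamma$ has constant rank $2n$ (its kernel being the line bundle spanned by the global Reeb field) and $\eta$ is nowhere zero, both $H$ and $\langle\eta\rangle$ are smooth subbundles of $T^*M$, and the pointwise splitting is therefore a smooth Whitney decomposition. The main (and really only) obstacle is the transversality claim $\eta \notin H$, and the computation above using $i_{\mathcal{R}}\omega = 0$ together with $\eta(\mathcal{R}) = 1$ resolves it cleanly; the rest is a rank argument supported by the proposition on the existence and uniqueness of $\mathcal{R}$ proved just above.
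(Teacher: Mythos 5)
Your proof is correct, and it follows exactly the route the paper intends: the corollary is stated without an explicit proof as a consequence of the preceding proposition, and your argument---using $i_{\mathcal{R}}\omega=0$ together with $\eta(\mathcal{R})=1$ to show $H_x\cap\langle\eta_x\rangle=0$, then counting ranks ($\operatorname{rank}H=2n$ since $\ker\gamma=\langle\mathcal{R}\rangle$)---is precisely the implicit deduction. Nothing is missing; the observation that $\gamma$ has constant rank, so that $H$ is a genuine smooth subbundle, is a worthwhile detail the paper leaves unstated.
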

Given a partially cosymplectic structure $(\omega, \eta)$ on $M$, we can introduce the notion of evolution vector field for any function $f$ defined on $M$; indeed, given a function $f \in C^\infty(M)$, there is a unique vector field $E_f$, which will be called evolution vector field, such that
$$
\flat(E_f) = df + \eta,
$$
where $\flat : TM \longrightarrow T^*M$ is the isomorphism defined by
$$
\flat(X) = i_X \, \omega + \eta(X) \eta
$$

\begin{remark}
{\rm 
Our structure does not coincide with the one previously defined by Acakpo \cite{acakpo2022stable}, called stable Hamiltonian structure.
Indeed, a \textbf{stable Hamiltonian structure} (SHS) is a triple $(M, \omega, \lambda)$ where $M$ is a $2n + 1$ dimensional manifold, $\omega$ is a closed $2$-form and $\lambda$ is a $1$-form such that 
$$
\lambda \wedge \omega^n \neq 0, \,\,\, \ker \omega \subseteq \ker d\lambda.
$$
There are also some relations with the mechanical presymplectic structures defined in \cite{jcm}.
}
\end{remark}

\subsection{Partially cosymplectic structures of higher order}

In order to obtain an appropriate framework for more complex thermodynamical systems, we introduce a generalization of the above geometric structures.

\begin{definition}
An almost cosymplectic structure of order $p$ on a manifold $M$ is a $(p+1)$-tuple $(\omega, \eta_1, \cdots , \eta_p)$, where $\omega$ is a 2-form and $\eta_1, \cdots, \eta_p$ are 1-forms such that
$$
\omega^n \wedge \eta_1 \wedge \cdots \wedge \eta_p \not= 0.
$$
(Here, $2n+p$ is the dimension of $M$).
When $\omega$ and $\eta$ are both closed, the structure is called cosymplectic of order $p$. If only $\omega$ is closed, then $(\omega, \eta_1, \cdots , \eta_p)$ is called partially cosymplectic of order $p$.
\end{definition}

\begin{prop}
    Given an almost cosymplectic structure of order $p$ $(\omega,\eta_1,\cdots,\eta_p)$ on $M$, the $\mathcal{C}^{\infty}(M)$-morphism given by:
    \begin{align*}
    \flat:&TM \longrightarrow T^*M \\
          &X \quad\longmapsto i_X\omega+\sum_k\eta_k(X)\eta_k
    \end{align*}
    is an isomorphism of $\mathcal{C}^{\infty}(M)$ modules.
\end{prop}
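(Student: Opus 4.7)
The plan is to prove $\flat$ is an isomorphism by showing it is injective on each fiber. Since $TM$ and $T^*M$ have the same rank $2n+p$, a fiberwise injective $C^\infty(M)$-linear map between them is automatically an isomorphism of modules of sections, so the entire argument reduces to checking that $\flat(X)=0$ implies $X=0$ at every point $x\in M$.

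Fix $x\in M$ and suppose $\flat(X)=i_X\omega+\sum_k \eta_k(X)\,\eta_k=0$ at $x$. First I would evaluate this 1-form on $X$ itself: because $\omega$ is antisymmetric, $(i_X\omega)(X)=\omega(X,X)=0$, so the identity collapses to
\[
\sum_{k=1}^{p}\bigl(\eta_k(X)\bigr)^{2}=0.
\]
Since we are working over the reals, this forces $\eta_k(X)=0$ for every $k$, and then the original equation further gives $i_X\omega=0$.

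Next I would exploit the nondegeneracy assumption $\omega^{n}\wedge\eta_1\wedge\cdots\wedge\eta_p\neq 0$. Using the derivation property of the interior product on wedge products,
\[
i_X\bigl(\omega^{n}\wedge\eta_1\wedge\cdots\wedge\eta_p\bigr)
= n\,(i_X\omega)\wedge\omega^{n-1}\wedge\eta_1\wedge\cdots\wedge\eta_p
+\sum_{k=1}^{p}(-1)^{\,\star}\,\eta_k(X)\,\omega^{n}\wedge\eta_1\wedge\cdots\widehat{\eta_k}\cdots\wedge\eta_p,
\]
where the signs are the usual ones. Every term on the right vanishes by the two conclusions of the previous paragraph, so $i_X\bigl(\omega^{n}\wedge\eta_1\wedge\cdots\wedge\eta_p\bigr)=0$ at $x$. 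If $X(x)\neq 0$, completing $X(x)$ to a basis of $T_xM$ and evaluating this interior product on the remaining $2n+p-1$ basis vectors would give $\bigl(\omega^{n}\wedge\eta_1\wedge\cdots\wedge\eta_p\bigr)(X,e_1,\ldots,e_{2n+p-1})=0$, contradicting the hypothesis that the top form is nonzero at $x$. Hence $X(x)=0$, which proves injectivity and therefore that $\flat$ is a bundle isomorphism.

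The argument is essentially routine once one notices the self-contraction trick $\omega(X,X)=0$; the only place any care is required is the sign bookkeeping in expanding $i_X$ over the product $\omega^{n}\wedge\eta_1\wedge\cdots\wedge\eta_p$, but since each vanishing term is killed independently the exact signs do not affect the conclusion. The mild subtlety to flag is the use of real (rather than complex) coefficients in passing from $\sum_k \eta_k(X)^2=0$ to $\eta_k(X)=0$; if one wished to formulate the statement over $\mathbb C$ the argument would have to be reorganized, but for real manifolds this step is immediate.
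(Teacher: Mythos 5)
Your proposal is correct and follows essentially the same route as the paper's own proof: fiberwise injectivity via the self-contraction $\flat(X)(X)=\sum_k(\eta_k(X))^2=0$, then deducing $i_X\omega=0$ and contradicting the nonvanishing of $\omega^n\wedge\eta_1\wedge\cdots\wedge\eta_p$ by evaluating on a basis containing $X$. The only difference is that you spell out the interior-product expansion of the top form explicitly, where the paper simply asserts the evaluation vanishes; this is a presentational refinement, not a different argument.
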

\begin{proof}
    It is enough to prove that $\flat_x:T_xM\longrightarrow T_x^*M$ is an isomorphism and so we can argue locally.

    Since both $T^*_xM$ and $T_xM$ have equal dimensions, it suffices to show that $\flat$ is one to one. Suppose there is a non zero tangent vector $X \in T_xM$ such that $\flat(X)=0$. 

    Then $\flat(X)(X)=\sum_k\left(\eta_k(X)\right)^2=0$, so we can conclude that $\eta_k(X)=0 \quad \forall \, k$.
    Then $i_X\omega=0$ and thus, if we consider a basis of $T_xM$, $\{X,X_2,\ldots,X_{2n+p}\}$, we would have that $\omega^n\wedge\eta_1\wedge\cdots\wedge\eta_p(X,X_2,\ldots,X_{2n+p})=0$, which cannot be.
\end{proof}

\begin{corollary}
    Given an almost cosymplectic structure of order $p$ $(\omega,\eta_1,\cdots,\eta_p)$ on $M$, there exist unique vector fields $\mathcal{R}_k$, $k=1,\ldots,p$ such that:
    $$
    i_{\mathcal{R}_k}\omega=0 \qquad i_{\mathcal{R}_k}\eta_j=\delta_{kj}
    $$
The family $\{\mathcal{R}_k,\; k=1, \ldots, p\}$ will be called the family of Reeb vector field of the structure.
\end{corollary}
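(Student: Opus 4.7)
The plan is to combine a one-line global uniqueness argument based on the isomorphism $\flat$ from the preceding proposition with a pointwise linear-algebra construction for existence, then patch in smoothness at the end.

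For uniqueness, if $\mathcal{R}_k$ satisfies both prescribed conditions then
\[
\flat(\mathcal{R}_k)=i_{\mathcal{R}_k}\omega+\sum_j \eta_j(\mathcal{R}_k)\,\eta_j=0+\sum_j \delta_{jk}\,\eta_j=\eta_k,
\]
so by the previous proposition $\mathcal{R}_k=\flat^{-1}(\eta_k)$ is forced. Only existence needs real work.

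For existence I would work pointwise. The non-degeneracy condition $\omega^n\wedge\eta_1\wedge\cdots\wedge\eta_p\neq 0$ in particular forces $\omega_x^n\neq 0$ at every $x$, so $\omega_x$ has maximal rank $2n$ and $K_x:=\ker\omega_x$ is $p$-dimensional. The key step is then to verify that the restrictions $\eta_1|_{K_x},\ldots,\eta_p|_{K_x}$ are linearly independent in $K_x^*$: picking a basis $\{e_1,\ldots,e_p\}$ of $K_x$ and a complementary basis $\{v_1,\ldots,v_{2n}\}$ on which $\omega_x$ is symplectic, every term in the shuffle expansion of $(\omega^n\wedge\eta_1\wedge\cdots\wedge\eta_p)(e_1,\ldots,e_p,v_1,\ldots,v_{2n})$ vanishes unless the $e_k$'s are contracted with the $\eta_j$'s and the $v_i$'s with $\omega^n$, so the whole expression collapses, up to sign, to $\omega_x^n(v_1,\ldots,v_{2n})\cdot\det(\eta_j(e_k))_{j,k}$. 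Non-vanishing of the top form therefore forces $\det(\eta_j(e_k))\neq 0$, which is exactly the required independence. Having this, I define $\mathcal{R}_k(x)\in K_x$ to be the basis of $K_x$ dual to $\eta_1|_{K_x},\ldots,\eta_p|_{K_x}$, so that automatically $i_{\mathcal{R}_k(x)}\omega_x=0$ and $\eta_j(\mathcal{R}_k(x))=\delta_{jk}$.

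Smoothness of $x\mapsto\mathcal{R}_k(x)$ comes for free: by the uniqueness step this pointwise section must coincide with $\flat^{-1}(\eta_k)$, which is smooth because $\flat$ is a smooth bundle isomorphism. The main obstacle is the linear-algebra lemma that $\eta_j|_{K_x}$ span $K_x^*$; once the shuffle computation above is in hand, the rest of the proof is essentially formal bookkeeping, parallel to the $p=1$ case already treated in the preceding proposition.
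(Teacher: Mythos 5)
Your uniqueness step and your smoothness-via-$\flat^{-1}$ step are fine, and the shuffle identity $(\omega^n\wedge\eta_1\wedge\cdots\wedge\eta_p)(e_1,\ldots,e_p,v_1,\ldots,v_{2n})=\pm\,\omega^n(v_1,\ldots,v_{2n})\det\bigl(\eta_j(e_k)\bigr)$ is correct \emph{once you know} that $K_x=\ker\omega_x$ has dimension $p$. The gap is exactly there: for $p\ge 2$ the condition $\omega^n\wedge\eta_1\wedge\cdots\wedge\eta_p\neq 0$ does \emph{not} force $\omega_x$ to have rank $2n$. On a $(2n+p)$-dimensional space, $\omega^n\neq 0$ only gives rank $\ge 2n$, and the rank can be strictly larger. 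Concretely, on $M=\R^4$ take $\omega=dx_1\wedge dx_2+dx_3\wedge dx_4$, $\eta_1=dx_3$, $\eta_2=dx_4$ (so $n=1$, $p=2$): then $\omega\wedge\eta_1\wedge\eta_2=dx_1\wedge dx_2\wedge dx_3\wedge dx_4\neq 0$, yet $\ker\omega=0$ and no vector field with $i_{\mathcal{R}_1}\omega=0$, $\eta_j(\mathcal{R}_1)=\delta_{1j}$ exists at all. So your existence argument cannot be repaired as written; what is actually needed (and what holds for every structure constructed later in the paper) is the additional hypothesis that $\omega$ has constant rank $2n$, i.e.\ corank $p$.

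For comparison, the paper's own proof is the one-liner ``take $\mathcal{R}_k=\flat^{-1}(\eta_k)$,'' which silently suffers from the same issue: in the example above $\flat^{-1}(dx_3)=\tfrac12(\partial_{x_3}-\partial_{x_4})$, which satisfies neither $i_{\mathcal{R}_1}\omega=0$ nor $\eta_1(\mathcal{R}_1)=1$. So the corollary itself needs $\operatorname{rank}\omega=2n$ as a hypothesis. Under that hypothesis your route is complete and in fact more informative than the paper's: the paper never checks that $\flat^{-1}(\eta_k)$ has the two defining properties, whereas your kernel/dual-basis construction verifies them directly, and your uniqueness step then identifies the two constructions and gives smoothness. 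The fix is to state the rank (equivalently, $\dim\ker\omega_x=p$) assumption explicitly and delete the sentence claiming it follows from $\omega^n\neq 0$.
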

\begin{proof}
    It suffices to take $\flat^{-1}(\mathcal{R}_k)=\eta_k$.
\end{proof}

\begin{definition}
    Given $(\omega,\eta_1,\ldots,\eta_p)$ a partially cosymplectic structure of order $p$ over $M$, $F$ a 1-form on $M$ and $f$ a function over $M$, we define the evolution vector field of $f$ subject to the forces $F$ as the unique vector field over $M$ satisfying:
    $$
    \flat(E_f)=df+\eta-F
    $$
\end{definition}

\begin{definition}
    Let $(\omega,\eta_1,\ldots,\eta_p)$ and $(\Omega,\gamma_1,\ldots,\gamma_p)$ be partially cosymplectic structures of order $p$ over $M$ and $N$, respectively. A diffeomorphism $S: N \longrightarrow M$ is a cosymplectomorphism of order $p$ if:
    $$S^*\omega=\Omega \qquad S^*\eta_i = \gamma_i \qquad\, i=1,\ldots,p$$
\end{definition}

\begin{prop}
    Let $(\omega,\eta_1,\ldots,\eta_p)$ and $(\Omega,\gamma_1,\ldots,\gamma_p)$ be partially cosymplectic structures of order $p$ over $M$ and $N$ and $S: N \longrightarrow M$ a cosymplectomorphism of order $p$. Let $F$ be a 1-form on $M$ and $\tilde{F}=S^*F$ and let $f$ be a function on $M$, $\tilde{f}=f\circ S$. Let $E$ and $\xi$ be the evolution vector fields of f subject to the forces $F$ and $\tilde{f}$ subject to the forces $\tilde{F}$ on $M$ and $N$, respectively. Then:
    $$
    E=TS\circ \xi \circ S^{-1}
    $$
\end{prop}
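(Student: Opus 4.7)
The plan is to invoke the uniqueness clause in the definition of the evolution vector field: if we can exhibit a vector field on $M$ whose image under $\flat_M$ equals $df+\eta-F$, it must agree with $E$. The natural candidate is $\tilde{E}:=TS\circ\xi\circ S^{-1}$, i.e.\ the push-forward of $\xi$ along $S$, and the task is to verify the defining equation for it.

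The key step is a naturality lemma: for any vector field $X$ on $N$, setting $Y=TS\circ X\circ S^{-1}$ (so that $X$ and $Y$ are $S$-related), one has $S^*\bigl(\flat_M(Y)\bigr)=\flat_N(X)$. This follows by direct computation from the cosymplectomorphism conditions $S^*\omega=\Omega$ and $S^*\eta_k=\gamma_k$. First, $S$-relatedness of $X$ and $Y$ combined with $S^*\omega=\Omega$ gives $S^*(i_Y\omega)=i_X(S^*\omega)=i_X\Omega$. Second, for each $k$ the function $\eta_k(Y)$ on $M$ pulls back to $(S^*\eta_k)(X)=\gamma_k(X)$, while the 1-form $\eta_k$ pulls back to $\gamma_k$, so $S^*\bigl(\eta_k(Y)\,\eta_k\bigr)=\gamma_k(X)\,\gamma_k$. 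Summing over $k$ produces $\flat_N(X)$.

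Now I would apply $S^*$ to the desired equation $\flat_M(\tilde{E})=df+\eta-F$ and check both sides separately. The right-hand side pulls back to $d\tilde{f}+\tilde{\eta}-\tilde{F}$, using $S^*\circ d=d\circ S^*$, the identity $S^*\eta_k=\gamma_k$ (which also handles the total $\eta$), and the hypothesis $\tilde{F}=S^*F$. The left-hand side, by the naturality lemma applied to $X=\xi$ and $Y=\tilde{E}$, pulls back to $\flat_N(\xi)$, and by the very definition of $\xi$ this equals $d\tilde{f}+\tilde{\eta}-\tilde{F}$. Hence the two pulled-back 1-forms on $N$ coincide, and since $S$ is a diffeomorphism the map $S^*$ on 1-forms is injective, so $\flat_M(\tilde{E})=df+\eta-F$ holds on $M$. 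By the uniqueness clause following the definition of the evolution vector field (which is guaranteed by the previous proposition stating that $\flat$ is a $\mathcal{C}^{\infty}(M)$-module isomorphism), we conclude $E=\tilde{E}=TS\circ\xi\circ S^{-1}$.

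The only place where care is needed is the bookkeeping in the naturality lemma, in particular the point that pulling back the function $\eta_k(Y)$ produces $\gamma_k(X)$ precisely because $Y$ and $X$ are $S$-related and $S^*\eta_k=\gamma_k$. There is no substantive obstacle beyond this: the proof reduces to the statement that all the ingredients ($d$, the 1-forms $\eta_k$, the force $F$, the function $f$, and the isomorphism $\flat$) are natural with respect to cosymplectomorphisms of order $p$.
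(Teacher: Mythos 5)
Your proof is correct and takes essentially the same route as the paper's: both verify that the candidate $TS\circ\xi\circ S^{-1}$ satisfies the defining equation by transporting everything across $S$ using $S^*\omega=\Omega$ and $S^*\eta_k=\gamma_k$, and then conclude by injectivity of $\flat$. The only cosmetic difference is that you pull back to $N$ via $S^*$ while the paper pushes forward to $M$ via $(S^{-1})^*$; your ``naturality lemma'' is exactly the chain of equalities in the paper's displayed computation.
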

\begin{proof}
    Let $X=TS \circ \xi \circ S^{-1}$
   $$
\flat(X)=i_X\omega+\sum_i\eta_i(X)\eta_i=i_X((S^{-1})^*\Omega)+\sum_i\left((S^{-1})^*\gamma_i\right)(X)(S^{-1})^*\gamma_i=$$ 
$$=(S^{-1})^*i_{\xi\circ S^{-1}}\Omega+\sum_i\gamma_i(\xi\circ S^{-1})(S^{-1})^*\gamma_i=(S^{-1})^*\left(i_{\xi\circ S^{-1}}\Omega+\sum_i\gamma_i(\xi\circ S^{-1})\gamma_i\right)
   $$ 
(and using the definition of $\xi$)
   $$
   =(S^{-1})^*\left(d\tilde{f}-\sum_i\gamma_i-\tilde{F}\right)=df-\sum_i\eta_i-F=\flat(E)
   $$
   As $\flat$ is an isomorphism, we conclude that both vector fields are equal.
\end{proof}

\section{A geometric description of thermodynamical systems}

\subsection{Adiabatically Closed Simple Thermodynamic Systems}
Let $Q$ be the configuration manifold that describes the mechanical variables of the system, and let $T^*Q$ be its cotangent bundle. Let the entropy of the system be described by a variable $S \in \mathbb{R}$. Let $M=T^*Q \times \mathbb{R}$.

Consider a Hamiltonian function:
$$
H: M \longrightarrow \mathbb{R}
$$
Let $F^{ext}, F^{fr}: M \longrightarrow T^*Q$ be  fiber-preserving functions which represent the external force and the friction force applied to the system. In local coordinates of $T^*Q$, $(q^i,p_i)$:
$$
F^{ext}=F^{ext}_i(q,p,S)dq^i, \qquad F^{fr}=F^{fr}_i(q,p,S)dq^i
$$
This is, $F^{ext},F^{fr}$ are semibasic forms. We define the 1-form over $M$:
$$
\eta=-\frac{\partial H}{\partial S}dS-F^{fr}
$$
and the 2-form over M:
$$
\omega = \pi_Q^*\omega_Q ,
$$
where $\pi_Q:M\longrightarrow T^*Q$ is the canonical projection and $\omega_Q$ is the canonical symplectic form defined over $T^*Q$. In local coordinates $(q^i,p_i,S)$:
$$
\omega=dq^i\wedge dp_i
$$
Notice that the pair $(\omega,\eta)$ defines a partially cosymplectic structure on M, regardless of what expression $F^{fr}$ takes.

Consider the isomorphism $\flat: TM \longrightarrow T^*M$ defined in Section 2.1.
\begin{remark}{\rm
    It is worth mentioning that although the isomorphism $\flat$ leads to the sum of two magnitudes with different dimensions, since $\omega$ has dimensions of action whereas $\eta\otimes\eta$ has dimensions of energy square, the corollary \ref{CotangentDecomposition} allows us to consider the decomposition $T^*M=H\oplus\langle\eta\rangle$ of the cotangent bundle. Thus, each of the terms of the sum that defines $\flat$ lies on a different vector bundle and there is no physical incompatibility in the sum.}
\end{remark}

Considering local coordinates we have that:
\begin{align*}
  \flat(\frac{\partial}{\partial q^i})&=dp_i-F^{fr}_i\eta \\
  \flat(\frac{\partial}{\partial p_i})&=-dq^i \\
  \flat(\frac{\partial}{\partial S})&=-\frac{\partial H}{\partial S}\eta 
\end{align*}
Let ${\mathcal E}_H$ be the evolution vector field of H subject to external forces, defined by the relation:
\begin{equation} \label{EvolutionFieldAdiabaticallyClosedSimple}
    \flat({\mathcal E}_H)=dH +\eta -F^{ext}
\end{equation}
Let ${\mathcal E}_H$ be locally given by:
$$
{\mathcal E}_H=A^i\frac{\partial}{\partial q^i}+B_i\frac{\partial}{\partial p_i}+C\frac{\partial}{\partial S}
$$
The right-hand side of (\ref{EvolutionFieldAdiabaticallyClosedSimple}) is locally given by:
\begin{equation} \label{RightEvolutionFieldAdiabaticallyClosedSimple}
    dH + \eta -F^{ext} = \left(\frac{\partial H}{\partial q^i}-F^{fr}_i-F^{ext}_i\right)dq^i+\frac{\partial H}{\partial p_i}dp_i
\end{equation}
Using the linearity of $\flat$, we conclude that, locally:
\begin{equation}\label{LeftEvolutionFieldAdiabaticallyClosedSimple}
    \flat({\mathcal E}_H)=-B_idq^i+A^idp_i-\left(A^iF^{fr}_i+C\frac{\partial H}{\partial S}\right) \eta
\end{equation}
Taking into account that the 1-forms $\{ dq^i,dp_i,\eta\}$ form a base of the cotangent space at the points of an open subset of $M$, we conclude that:
\begin{align*}
  &A^i =\frac{\partial H}{\partial p_i} \\
  &B_i=-\frac{\partial H}{\partial q^i}+F^{fr}_i +F^{ext}_i\\
  &A^iF^{fr}_i+C\frac{\partial H}{\partial S}=0 
\end{align*}
Thus, we have proved the following result:
\begin{prop}
Every integral curve of the evolution vector field ${\mathcal E}_H$ of $H$, $(q(t),p(t),S(t))$, is a solution of the equations:
\begin{align}
    \frac{d q^i}{dt} &= \frac{\partial H}{\partial p_i}\\
    \frac{d p_i}{dt} &= -\frac{\partial H}{\partial q^i}+F_i^{fr}+F^{ext}_i\label{CurveAdiabaticallyClosedSimple}\\
    \frac{d S}{dt} &= -\frac{1}{\frac{\partial H}{\partial S}}\frac{\partial H}{\partial p_j}F^{fr}_j
\end{align}
\end{prop}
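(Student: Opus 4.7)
The plan is to work entirely in local coordinates $(q^i, p_i, S)$ on $M = T^*Q \times \mathbb{R}$ and to extract the three scalar ODEs by matching coefficients in the defining relation $\flat(\mathcal{E}_H) = dH + \eta - F^{ext}$. First I would expand $\mathcal{E}_H = A^i \frac{\partial}{\partial q^i} + B_i \frac{\partial}{\partial p_i} + C \frac{\partial}{\partial S}$ and apply $\flat$ term by term, using the three already-computed identities for $\flat$ on the coordinate basis. Independently, I would expand $dH + \eta - F^{ext}$, recalling that $\eta = -\frac{\partial H}{\partial S}dS - F^{fr}$ and that both $F^{fr}$ and $F^{ext}$ are semibasic (so they contribute only $dq^i$-terms); this is exactly the computation displayed in the two equations preceding the proposition.

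The key step is then to observe that on the open subset where $\frac{\partial H}{\partial S} \neq 0$, the triple $\{dq^i, dp_i, \eta\}$ is a local coframe on $M$. Indeed, since $F^{fr}$ is semibasic, the only $dS$-component of $\eta$ is $-\frac{\partial H}{\partial S}$, which guarantees linear independence from the $dq^i$ and $dp_i$. This same nondegeneracy is precisely what makes $(\omega,\eta)$ partially cosymplectic, and physically corresponds to strict positivity of the temperature $\frac{\partial H}{\partial S}$. Matching the coefficients of $dq^i$, $dp_i$ and $\eta$ on both sides then immediately yields
$$
A^i = \frac{\partial H}{\partial p_i}, \qquad B_i = -\frac{\partial H}{\partial q^i} + F^{fr}_i + F^{ext}_i, \qquad A^i F^{fr}_i + C\,\frac{\partial H}{\partial S} = 0.
$$

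To finish, I would invoke the fact that $(q(t),p(t),S(t))$ is an integral curve of $\mathcal{E}_H$, so $\dot{q}^i = A^i$, $\dot{p}_i = B_i$ and $\dot{S} = C$. Substituting $A^i = \frac{\partial H}{\partial p_i}$ into the third algebraic relation and solving for $C$ produces the entropy equation $\dot{S} = -(\partial H/\partial S)^{-1}(\partial H/\partial p_j)F^{fr}_j$. I do not anticipate any serious obstacle: the whole argument is essentially a coefficient comparison in a well-chosen coframe, and the only genuinely subtle point is checking that $\{dq^i, dp_i, \eta\}$ really is a coframe, which, as noted, hinges on $\frac{\partial H}{\partial S} \neq 0$. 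Under that (natural thermodynamic) hypothesis, the three equations follow simultaneously and the proposition is established.
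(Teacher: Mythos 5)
Your proposal is correct and follows essentially the same route as the paper: expand both sides of $\flat(\mathcal{E}_H)=dH+\eta-F^{ext}$ in the coframe $\{dq^i,dp_i,\eta\}$, match coefficients, and then read off the ODEs along an integral curve. Your explicit remark that the coframe property (and indeed the partially cosymplectic condition $\omega^n\wedge\eta\neq 0$) hinges on $\frac{\partial H}{\partial S}\neq 0$ is a point the paper leaves implicit, but it does not change the argument.
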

In particular, we have the following equality
\begin{equation} \label{EntropyAdiabaticallyClosedSimple}
    -\frac{\partial H}{\partial S} \frac{d S}{dt}=\frac{d q^j}{dt}F^{fr}_j
\end{equation}
If we start by considering a regular Lagrangian function $L: TQ\times \mathbb{R} \longrightarrow \mathbb{R}$, then, as the Legendre transformation is a local diffeomorphism from the tangent bundle into the cotangent bundle, due to the properties of the product manifolds, it will be a local diffeomorphism when extended to an application from $TQ\times \R$ to $M$. 

If we define the energy of the Lagrangian as:
$$
E_L=\Delta (L) -L,
$$
where $\Delta$ is the Liouville vector field, we can define the Hamiltonian function locally as $H=E_L\circ Leg^{-1}$. In local coordinates:
$$
H(q,p,S)=p_i\Dot{q}^i(q,p,S)-L(q,\Dot{q}(q,p,S),S)
$$
Direct computation in local coordinates shows that:
$$
\frac{\partial H}{\partial S}=-\frac{\partial L}{\partial S}
$$
We define the 1-forms over $TQ\times\R$ given by $\tilde{F}^{ext}=Leg^*F^{ext}$ and $\tilde{F}^{fr}=Leg^*F^{fr}$. and the 1-form:
$$
\eta_L=\pder{L}{S}dS-\tilde{F}^{fr}
$$
We denote by
$$
\Omega_L=-d\lambda_L
$$
where $\lambda_L = S^*(dL)$ considered now as a 1-form on $TQ \times \mathbb{R}$. Then we have $Leg^*\omega= \Omega_L$ and $Leg^*\eta= \eta_L$. Thus, it is straightforward that $(\Omega_L,\eta_L)$ is a partially cosymplectic structure over $TM\times \R$. Let $\xi_L$ be the evolution vector field of $E_L$ subject to the external forces $\tilde{F}^{ext}$, defined by the relation:
$$
\flat({\mathcal E}_L)=dE_L+\eta_L-\tilde{F}^{ext}
$$
\begin{prop}
    If $L$ is hyperregular, that is, if the Legendre transformation $Leg$ is a global diffeomorphism, we can globally define $H$ on $M$ and:
    $$
    {\mathcal E}_H=TLeg \circ {\mathcal E}_L \circ Leg^{-1}
    $$
\end{prop}
\begin{proof}
It suffices to note that under these hypotheses, $Leg$ is a cosymplectomorphism of order $1$ and use the results of Section 3.2.
\end{proof}
\begin{corollary}
    Suppose $L$ is hyperregular. Then if $\gamma$ is an integral curve of ${\mathcal E}_L$, $\sigma=Leg^{-1} \circ \gamma$ is an integral curve of ${\mathcal E}_H$.    
\end{corollary}
\begin{corollary}
    If $L$ is a hyperregular Lagrangian and $\gamma$ is an integral curve of ${\mathcal E}_L$, it holds that:
    \begin{align} \label{CurveAdiabaticallyClosedSimpleLagrangian}
        &\frac{d}{dt}\left(\pder{L}{\Dot{q}^i}\right)-\pder{L}{q^i}=\tilde{F}_i^{fr}+\tilde{F}_i^{ext} \\
        &\pder{L}{S}\frac{dS}{dt}=\frac{dq^i}{dt}\tilde{F}^{fr}_i
    \end{align}
    where $\tilde{F}_i^{fr}=F_i^{fr}\circ Leg$ and $\tilde{F}_i^{ext}=F_i^{ext}\circ Leg$.
\end{corollary}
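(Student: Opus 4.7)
The plan is to derive these equations as a direct translation of the Hamiltonian evolution equations from the preceding proposition, using the fact that under hyperregularity the Legendre transformation is a cosymplectomorphism of order $1$ intertwining $\mathcal{E}_L$ and $\mathcal{E}_H$.

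First I would invoke the preceding proposition: since $L$ is hyperregular, $\sigma := Leg \circ \gamma$ is an integral curve of $\mathcal{E}_H$ on $M = T^*Q \times \mathbb{R}$. Writing $\gamma(t) = (q^i(t), \dot{q}^i(t), S(t))$ and $\sigma(t) = (q^i(t), p_i(t), S(t))$, the Legendre transformation gives the defining relation $p_i(t) = \frac{\partial L}{\partial \dot{q}^i}(\gamma(t))$, and the semibasic nature of the forces yields $F^{fr}_i(\sigma(t)) = \tilde{F}^{fr}_i(\gamma(t))$ and similarly for $F^{ext}$, because $\tilde{F}^{fr} = Leg^* F^{fr}$ and the Legendre transformation preserves base points on $Q \times \mathbb{R}$.

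Next I would apply the Hamilton-type equations from the preceding proposition to $\sigma$. The second equation reads
\begin{equation*}
\frac{dp_i}{dt} = -\frac{\partial H}{\partial q^i} + F^{fr}_i + F^{ext}_i.
\end{equation*}
Substituting $p_i = \frac{\partial L}{\partial \dot q^i}$ on the left and using the standard Legendre identity $\frac{\partial H}{\partial q^i} = -\frac{\partial L}{\partial q^i}$ (a direct consequence of $H = p_j \dot q^j - L$, which is valid globally by hyperregularity) on the right immediately produces the first claimed equation
\begin{equation*}
\frac{d}{dt}\!\left(\frac{\partial L}{\partial \dot q^i}\right) - \frac{\partial L}{\partial q^i} = \tilde{F}^{fr}_i + \tilde{F}^{ext}_i.
\end{equation*}

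For the entropy equation I would use the third Hamilton equation $\frac{dS}{dt} = -\bigl(\frac{\partial H}{\partial S}\bigr)^{-1} \frac{\partial H}{\partial p_j} F^{fr}_j$, substitute the identity $\frac{\partial H}{\partial S} = -\frac{\partial L}{\partial S}$ already established in the text just before the proposition, and use the first Hamilton equation $\frac{dq^j}{dt} = \frac{\partial H}{\partial p_j}$. Multiplying through by $\frac{\partial L}{\partial S}$ and again replacing $F^{fr}_j$ by $\tilde F^{fr}_j$ along the curve produces
\begin{equation*}
\frac{\partial L}{\partial S}\,\frac{dS}{dt} = \frac{dq^j}{dt}\,\tilde{F}^{fr}_j.
\end{equation*}
There is no real obstacle here; the only thing to be careful about is making sure the pullback identification of the forces and the standard Legendre identities for the partial derivatives of $H$ versus $L$ are applied consistently along the curve, so that the equations on $T^*Q \times \mathbb{R}$ transform cleanly into equations on $TQ \times \mathbb{R}$.
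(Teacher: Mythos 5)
Your proposal is correct and follows exactly the route the paper intends (the paper leaves this corollary without an explicit proof, relying on the preceding proposition and corollary): push $\gamma$ forward by $Leg$ to an integral curve of ${\mathcal E}_H$, then rewrite the Hamiltonian evolution equations using $p_i=\pder{L}{\dot{q}^i}$, $\pder{H}{q^i}=-\pder{L}{q^i}$, $\pder{H}{S}=-\pder{L}{S}$ and the identification $\tilde{F}_i=F_i\circ Leg$. The only point worth noting is that you correctly use $\sigma=Leg\circ\gamma$ where the paper's preceding corollary writes $Leg^{-1}\circ\gamma$, which is evidently a typo given the stated direction of the maps.
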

So we can conclude that equations (\ref{CurveAdiabaticallyClosedSimpleLagrangian}) are equivalent to those obtained by Gay-Balmaz and Yoshimura \cite{hiro}.
\begin{remark}{\rm 
    If $L$ is not hyperregular it suffices to work in an open subset of $TQ\times \R$ such that the restriction of $Leg$ to that subset is a diffeomorphism.}
\end{remark}
\begin{example}
    Previous studies of thermodynamics have been carried out using contact geometry. In \cite{simoes2020contact} an evolution field $\mathcal{E}_H$ is defined as:
    $$
    \flat({\mathcal E}_H)= dH-\mathcal{R}(H)\eta ,
    $$
    where $\mathcal{R}(H)=\pder{H}{S}$ and using local coordinates, $\eta=dS-p_idq^i$. Thus, considering a friction force locally given by $F^{fr}_i=-\mathcal{R}(H)p_i$ then the evolution vector defined in \cite{simoes2020contact} is an example of the one defined in this paper.
\end{example}

\subsection{Systems with Internal Mass Transfer}

Let Q be the configuration manifold that describes the mechanical part of the thermodynamic system and let $T^*Q$ be its cotangent bundle. 
We will consider a system with $K$ internal compartments, each of them with $N_k$ particles. We will denote $P_2=\R^K$ the manifold that represents the number of particles in each compartment. We will also consider in each compartment the thermodynamic displacement associated with the exchange of matter, $W^k$ (whose derivative over the trajectory will be the temperature). Let $P_1=\R^K$ be the manifold that represents these thermodynamic displacements. Finally, we will consider the entropy of the system described by a real variable, $S\in \R$.

Let $N=T^*Q\times P_2\times\R$ and $M=T^*Q\times P_1\times P_2\times\R$, and let $\pi:M \longrightarrow N$ be the canonical projection. We will consider a Hamiltonian function independent of the thermodynamic displacement, that is, a function $\Tilde{H}:N\longrightarrow\R$ and its pullback by $\pi$, $H=\pi^*\Tilde{H}$.

We will also consider the 1-forms $\hat{F}^{fr},\hat{F}^{ext}:N\longrightarrow T^*Q$ and their pullbacks by $\pi$, $F^{fr}$, $F^{ext}$, which represent the friction and the external forces acting on the system. If we choose local adapted coordinates $(q^i,p_i)$ in $T^*Q$:
$$
F^{fr}=F^{fr}_idq^i\qquad F^{ext}=F^{ext}_idq^i
$$
We consider functions $\mathcal{J}_{l,k}:N\longrightarrow\R$ such that $\mathcal{J}_{l,k}=-\mathcal{J}_{k,l}$, which we will identify with their pullbacks by $\pi$. Let:
$$
\mathcal{J}=\sum_l \mathcal{J}_{l,k}dW^k:=\mathcal{J}_kdW^k
$$
Consider in $M$ the 1-form given by:
$$
\eta=-\frac{\partial H}{\partial S}dS-F^{fr}-\mathcal{J}
$$
and the 2-form:
$$
\omega = dq^i\wedge dp_i+dW^k\wedge dN_k
$$
Then $(\omega,\eta)$ is a partially cosymplectic structure over $M$. Considering local coordinates $(q^i,p_i)$ in $T^*Q$, the isomorphism $\flat$ satisfies:
\begin{align*}
    &\flat\left( \frac{\partial}{\partial q^i}\right)=dp_i-F_i^{fr}\eta &\qquad& \flat\left( \frac{\partial}{\partial W^k}\right)=dN_k-\mathcal{J}_k\eta \\
    &\flat\left( \frac{\partial}{\partial p_i}\right)=-dp^i &\qquad& \flat\left( \frac{\partial}{\partial N_k}\right)=-dW^k\\
   &\flat\left( \frac{\partial}{\partial S}\right)=-\frac{\partial H}{\partial S}\eta  
\end{align*}
Let ${\mathcal E}_H$ be the evolution vector field of $H$ subject to external forces, defined by the relation:
\begin{equation} \label{EvolutionFieldMassSimple}
    \flat({\mathcal E}_H)=dH +\eta -F^{ext}
\end{equation}
Using local coordinates, the expression of the right-hand side of (\ref{EvolutionFieldMassSimple}) is:
\begin{equation} \label{RightEvolutionFieldMassSimple}
    dH + \eta -F^{ext}= \left(\frac{\partial H}{\partial q^i}-F^{fr}_i-F^{ext}_i\right)dq^i+\frac{\partial H}{\partial p_i}dp_i+\frac{\partial H}{\partial N_k}dN_k-\mathcal{J}_kdW^k
\end{equation}
Setting ${\mathcal E}_H=A^i\frac{\partial}{\partial q^i}+B_i\pder{ }{p_i}+C^k\pder{ }{W^k}+D_k\pder{ }{N_k}+E\pder{}{S}$ and using the linearity of $\flat$, we conclude that, locally:
\begin{equation}\label{LeftEvolutionFieldMassSimple}
    \flat({\mathcal E}_H)=-B_idq^i+A^idp_i- D_k dW^k + C^kdN_k -\left(A^iF^{fr}_i+C^k\mathcal{J}_k+E\frac{\partial H}{\partial S}\right) \eta
\end{equation}
Taking into account that the 1-forms $\{dq^i,dp_i,dW^k,dN_k,\eta\}$ form a basis of $T^*M$ at the points of an open subset of $M$, we can equal the coefficients of these 1-forms at both sides of equation (\ref{EvolutionFieldMassSimple}). This proves the following result: 
\begin{prop}
Every integral curve of the evolution vector field ${\mathcal E}_H$ of $H$, $\sigma(t)$, is a solution of the equations:
\begin{align}
     & \frac{d q^i}{dt} = \frac{\partial H}{\partial p_i}\\
    &\frac{d p_i}{dt} = -\frac{\partial H}{\partial q^i}+F_i^{fr}+F^{ext}_i\\
    &\frac{d W^k}{dt} = \frac{\partial H}{\partial N_k}\\
    &\frac{d N_k}{dt} = \mathcal{J}_k\\
    &\frac{d S}{dt} = -\frac{1}{\frac{\partial H}{\partial S}}\left(\frac{\partial H}{\partial p_j}F^{fr}_j+\mathcal{J}_k\pder{H}{N_k}\right)
\end{align}
\end{prop}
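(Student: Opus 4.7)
The plan is to exploit the fact that the author has already expanded both sides of the defining relation $\flat(\mathcal{E}_H) = dH + \eta - F^{ext}$ in a local coordinate frame, cf.\ (\ref{RightEvolutionFieldMassSimple}) and (\ref{LeftEvolutionFieldMassSimple}). The proposition then follows by equating coefficients with respect to the basis $\{dq^i, dp_i, dW^k, dN_k, \eta\}$ of $T^*M$ and identifying the components of $\mathcal{E}_H$ along an integral curve with the time derivatives of the coordinate functions.

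First I would verify that $\{dq^i, dp_i, dW^k, dN_k, \eta\}$ is indeed a local basis of $T^*M$. Since $\eta = -\frac{\partial H}{\partial S}\,dS - F^{fr}_i\,dq^i - \mathcal{J}_k\,dW^k$, the coefficient of $dS$ is $-\partial H/\partial S$, which is nonzero on the open subset where the temperature does not vanish; in this region $dS$ is recoverable as a linear combination of $\eta$ together with the $dq^i$ and $dW^k$, so the collection has full rank. Equating the coefficients in (\ref{RightEvolutionFieldMassSimple}) and (\ref{LeftEvolutionFieldMassSimple}) then produces five relations: the coefficient of $dp_i$ gives $A^i = \partial H/\partial p_i$; the coefficient of $dq^i$ gives $B_i = -\partial H/\partial q^i + F^{fr}_i + F^{ext}_i$; the coefficient of $dN_k$ gives $C^k = \partial H/\partial N_k$; the coefficient of $dW^k$ gives $D_k = \mathcal{J}_k$; and the coefficient of $\eta$ gives $A^i F^{fr}_i + C^k \mathcal{J}_k + E\,\partial H/\partial S = 0$.

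To conclude, take an integral curve $\sigma(t) = (q^i(t), p_i(t), W^k(t), N_k(t), S(t))$ of $\mathcal{E}_H$; by definition the components coincide with the coordinate derivatives, so $A^i = dq^i/dt$, $B_i = dp_i/dt$, $C^k = dW^k/dt$, $D_k = dN_k/dt$ and $E = dS/dt$. Substituting the already computed values of $A^i$ and $C^k$ into the fifth relation and solving for $dS/dt$ yields the displayed equation for the entropy evolution. I do not foresee any serious obstacle: the proof is essentially a linear-algebra bookkeeping exercise, and the only subtle point is the non-vanishing of the temperature $\partial H/\partial S$, which guarantees both the basis property on an open dense subset and the ability to isolate $dS/dt$ algebraically.
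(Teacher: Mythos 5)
Your proposal is correct and follows essentially the same route as the paper: expanding both sides of $\flat(\mathcal{E}_H)=dH+\eta-F^{ext}$ in the local coframe $\{dq^i,dp_i,dW^k,dN_k,\eta\}$, equating coefficients, and identifying the components of $\mathcal{E}_H$ with the time derivatives along an integral curve. Your explicit check that this collection is a coframe precisely where $\partial H/\partial S\neq 0$ is a welcome clarification of what the paper only signals with the phrase ``at the points of an open subset of $M$''.
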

In particular, we have the following.
\begin{equation} \label{EntropyMassdSimple}
    -\frac{\partial H}{\partial S} \frac{d S}{dt}=\frac{d q^j}{dt}F^{fr}_j+\mathcal{J}_k\pder{H}{N_k}
\end{equation}

As in the previous section, if we consider a regular Lagrangian function:
$$
L: TQ\times P_2 \times \R\longrightarrow \R
$$
or more precisely, its pullback to $D=TQ\times P_1 \times P_2 \times \R$ then, as the Legendre transform is a local diffeomorphism from the tangent bundle into the cotangent bundle, due to the properties of the product manifolds, it will be a local diffeomorphism when extended to an application from $D$ to $M$. 

If we define the energy of the Lagrangian as:
$$
E_L=\Delta (L) -L
$$
where $\Delta$ is the Liouville vector field, then we can define the Hamiltonian function locally as $H=E_L\circ Leg^{-1}$. Direct computation in local coordinates again shows that:
$$
\frac{\partial H}{\partial S}=-\frac{\partial L}{\partial S}
$$
We define the 1-forms over $D$ given by $\tilde{F}^{ext}=Leg^*F^{ext}$, $\tilde{F}^{fr}=Leg^*F^{fr}$ and $\tilde{\mathcal{J}}=Leg^*\mathcal{J}$. and the 1-form:
$$
\eta_L=\pder{L}{S}dS-\tilde{F}^{fr}-\tilde{\mathcal{J}}
$$
We denote by
$$
\Omega_L=-d\lambda_L+dW^k\wedge dN_k
$$
where $\lambda_L = S^*(dL)$ considered now as a 1-form on $D$. Then we have $Leg^*\omega= \Omega_L$ and $Leg^*\eta= \eta_L$. Thus, it is straightforward that $(\Omega_L,\eta_L)$ is a partially cosymplectic structure over $D$. Let $\xi_L$ be the evolution vector field of $E_L$ subject to the external forces $\tilde{F}^{ext}$, defined by the relation:
$$
\flat({\mathcal E}_L)=dE_L+\eta_L-\tilde{F}^{ext}
$$
\begin{prop}
    If $L$ is hyperregular, that is, if the Legendre transformation $Leg$ is a global diffeomorphism, we can globally define $H$ on $M$ and:
    $$
    {\mathcal E}_H=TLeg \circ {\mathcal E} \circ Leg^{-1}
    $$
\end{prop}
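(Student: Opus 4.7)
The plan is to reduce this statement to the general cosymplectomorphism result from Section 3.2, exactly as was done for the analogous proposition in the preceding subsection, but now verifying the extra ingredients introduced by the thermodynamic displacements $W^k$ and particle numbers $N_k$. First, hyperregularity of $L$ means that $Leg:D\longrightarrow M$ is a global diffeomorphism, so $H=E_L\circ Leg^{-1}$ is unambiguously defined on all of $M$, and the pullback identities $Leg^{*}\omega=\Omega_L$ and $Leg^{*}\eta=\eta_L$ stated just before the proposition give us that $Leg$ is a cosymplectomorphism of order $1$ between the partially cosymplectic manifolds $(D,\Omega_L,\eta_L)$ and $(M,\omega,\eta)$.

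Next I would check the two remaining hypotheses needed to apply the proposition at the end of Section 3.2. For the Hamiltonian/energy, by construction $H\circ Leg=E_L$, i.e. $E_L$ is the pullback of $H$ under $Leg$. For the force 1-form, we have $\tilde{F}^{ext}=Leg^{*}F^{ext}$ by definition. Thus both the function and the external force on $M$ pull back under $Leg$ to the corresponding objects on $D$, which are precisely the hypotheses of the Section~3.2 proposition relating evolution vector fields. Applying that result yields
\begin{equation*}
{\mathcal E}_H=TLeg\circ{\mathcal E}_L\circ Leg^{-1},
\end{equation*}
which is the claimed identity.

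The only point requiring a small verification is the pullback of the 2-form, since $\omega=dq^i\wedge dp_i+dW^k\wedge dN_k$ contains the extra thermodynamic-displacement term. But the Legendre transformation acts as the identity on the $(W^k,N_k,S)$-factors of $D=TQ\times P_1\times P_2\times\mathbb{R}$ and affects only the $TQ$-part in the usual way, so $Leg^{*}(dW^k\wedge dN_k)=dW^k\wedge dN_k$, while $Leg^{*}(dq^i\wedge dp_i)=-d\lambda_L$; adding these gives $\Omega_L$, confirming $Leg^{*}\omega=\Omega_L$. The analogous component-wise argument for $\eta$ uses that $\partial H/\partial S=-\partial L/\partial S$ under the Legendre transform and that $\tilde{F}^{fr}$, $\tilde{\mathcal J}$ are by definition the pullbacks of $F^{fr}$ and $\mathcal J$. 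The main (and really only) conceptual point is recognising that the entire construction has been arranged so that Section~3.2 applies verbatim; once the cosymplectomorphism property is in hand, the result is immediate and no local coordinate computation with the evolution equations is needed.
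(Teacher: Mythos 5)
Your proof is correct and follows essentially the same route as the paper's: the paper simply observes that $Leg$ is a cosymplectomorphism of order $1$ (by the same reasoning as in the preceding subsection) and invokes the proposition of Section~3.2. Your additional verification that $Leg^{*}(dW^k\wedge dN_k)=dW^k\wedge dN_k$ and that $H$, $F^{ext}$ pull back correctly just makes explicit what the paper leaves implicit.
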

\begin{proof}
It follows from the same reasoning as Proposition 4.2.
\end{proof}
\begin{corollary}
    Suppose $L$ is hyperregular. Then if $\gamma$ is an integral curve of ${\mathcal E}_L$, $\sigma=Leg^{-1} \circ \gamma$ is an integral curve of ${\mathcal E}_H$.    
\end{corollary}
\begin{corollary}
    If $L$ is a hyperregular Lagrangian and $\gamma$ is an integral curve of ${\mathcal E}_L$, it holds that:
    \begin{align} \label{CurveMassLagrangian}
        &\frac{d}{dt}\left(\pder{L}{\Dot{q}^i}\right)-\pder{L}{q^i}=\tilde{F}_i^{fr}+\tilde{F}_i^{ext} \\
        &\frac{d W^k}{dt} = -\frac{\partial L}{\partial N_k}\\
        &\frac{d N_k}{dt} = \tilde{\mathcal{J}}_k\\
    &\frac{\partial L}{\partial S}\frac{d S}{dt} = \frac{dq^i}{dt}F^{fr}_j-\tilde{\mathcal{J}}_k\pder{L}{N_k}
    \end{align}
    where $\tilde{F}_i^{fr}=F_i^{fr}\circ Leg$, $\tilde{F}_i^{ext}=F_i^{ext}\circ Leg$ and $\tilde{\mathcal{J}}_k=\mathcal{J}_k\circ Leg$.
\end{corollary}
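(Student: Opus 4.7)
The plan is to reduce the corollary to the corresponding Hamiltonian statement (the proposition preceding it in this subsection) by pulling everything back through the Legendre transformation. Concretely, by the preceding corollary, if $\gamma$ is an integral curve of $\mathcal{E}_L$, then $\sigma = Leg \circ \gamma$ is an integral curve of $\mathcal{E}_H$, so $\sigma$ satisfies the five Hamilton-type equations already established for the mass-transfer system. The task then reduces to rewriting each of those equations in Lagrangian variables.

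First I would record the standard Legendre identities in this setting. Since $L$ is hyperregular and $H = E_L \circ Leg^{-1}$, writing $p_i = \partial L/\partial \dot q^i$ and holding the remaining variables fixed one obtains, by direct chain-rule computation (analogous to the one already done in the text for $\partial H/\partial S$), the identities
\begin{equation*}
\frac{\partial H}{\partial q^i} = -\frac{\partial L}{\partial q^i}, \qquad \frac{\partial H}{\partial N_k} = -\frac{\partial L}{\partial N_k}, \qquad \frac{\partial H}{\partial S} = -\frac{\partial L}{\partial S}, \qquad \frac{\partial H}{\partial p_i} = \dot q^i,
\end{equation*}
all evaluated along the curve. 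Also $\tilde F^{fr} = Leg^* F^{fr}$, $\tilde F^{ext} = Leg^* F^{ext}$ and $\tilde{\mathcal J} = Leg^* \mathcal J$ imply that the components satisfy $\tilde F^{fr}_i = F^{fr}_i \circ Leg$, $\tilde F^{ext}_i = F^{ext}_i \circ Leg$ and $\tilde{\mathcal J}_k = \mathcal J_k \circ Leg$, so along $\gamma$ these tilded quantities coincide with the untilded ones evaluated at $\sigma$.

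Next I would translate each Hamilton equation one by one. From $dp_i/dt = -\partial H/\partial q^i + F^{fr}_i + F^{ext}_i$ and $p_i = \partial L/\partial \dot q^i$, substituting the first identity above yields the forced Euler--Lagrange equation. From $dW^k/dt = \partial H/\partial N_k$ one gets $dW^k/dt = -\partial L/\partial N_k$ directly. The equation $dN_k/dt = \mathcal J_k$ transfers verbatim (with $\mathcal J_k$ replaced by $\tilde{\mathcal J}_k$ since we are now evaluating along $\gamma$). Finally, the entropy equation $-(\partial H/\partial S)(dS/dt) = (dq^j/dt) F^{fr}_j + \mathcal J_k (\partial H/\partial N_k)$ becomes, after applying $\partial H/\partial S = -\partial L/\partial S$ and $\partial H/\partial N_k = -\partial L/\partial N_k$,
\begin{equation*}
\frac{\partial L}{\partial S}\,\frac{dS}{dt} = \frac{dq^j}{dt}\,\tilde F^{fr}_j - \tilde{\mathcal J}_k\,\frac{\partial L}{\partial N_k},
\end{equation*}
which is exactly the last equation of the statement.

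There is no real obstacle here beyond careful sign bookkeeping: the only point that could trip one up is that the Legendre transformation flips the sign on the partial derivatives with respect to the passive variables $(q^i, N_k, S)$, which is what produces the minus sign in front of $\tilde{\mathcal J}_k \partial L/\partial N_k$ in the entropy balance. Once those identities are in hand, the proof is a one-line substitution into each of the five equations of the preceding proposition.
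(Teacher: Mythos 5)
Your proposal is correct and is essentially the argument the paper intends: the corollary is stated without proof as an immediate consequence of the preceding corollary (integral curves correspond under $Leg$) together with the standard Legendre identities $\partial H/\partial q^i=-\partial L/\partial q^i$, $\partial H/\partial N_k=-\partial L/\partial N_k$, $\partial H/\partial S=-\partial L/\partial S$, $\partial H/\partial p_i=\dot q^i$, which is exactly the substitution you carry out. Your sign bookkeeping in the entropy equation matches the stated result, so there is nothing to add.
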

So we can conclude that equations (\ref{CurveMassLagrangian}) are equivalent to those obtained by Gay-Balmaz and Yoshimura \cite{hiro}.

\subsection{Adiabatically Closed Non-Simple Thermodynamic Systems}

Consider a thermodynamical system composed of $P$ simple subsystems, each of them characterized by their entropy $S_A$. Consider that heat conduction, friction, and internal mass transfer occur. We will restrict our study to the case in which each subsystem has only one compartment.

Consider that the mechanical variables that describe the entire system lie in a manifold $Q$. Let $T^*Q$ be its cotangent bundle. Let the thermodynamical displacements associated with mass transfer be described by $P_1=\R^P$ and the number of particles in each subsystem by $P_2=\R^P$.

Similarly, let $P_3=\R^P$ describe thermic displacements (whose time derivative over the trajectory will be the temperature), $\Gamma^A$, and let $P^4$ describe the entropies of the subsystems, $S_A$. 
We will need to consider an auxiliary variable $\Sigma_A$ for each subsystem, which will equal the entropy of the subsystem on the trajectory of the system. Let these variables be in $P_5=\R^P$.

Let $N= T^*Q\times P_2 \times P_4$ and $M=T^*Q\times P_1 \times P_2 \times P_3 \times P_4 \times P_5$ and the canonical projection $\pi:M\longrightarrow N$. 

We consider 1-forms $F_A^{fr},F_A^{ext}:N\longrightarrow T^*Q$, which we identify with their pullbacks by $\pi$. If we choose local adapted coordinates $(q^i,p_i)$ in $T^*Q$:
$$
F_A^{fr}=F_{A,i}^{fr}dq^i\qquad F_A^{ext}=F_{A,i}^{ext}dq^i
$$
Let $F^{fr}=\sum_AF^{fr}_A$ and  $F^{ext}=\sum_AF^{ext}_A$.

We also consider a Hamiltonian function $H:N\longrightarrow \R$ that we identify with its pullback by $\pi$. Similarly, we consider functions $\mathcal{J}_{l,k}:N\longrightarrow\R$ such that $\mathcal{J}_{l,k}=-\mathcal{J}_{k,l}$ and again identify them with their pullbakcs to $M$. Let $\mathcal{J}_k=\sum_l\mathcal{J}_{l,k}$. Finally, we consider functions $J_{AB}:N\longrightarrow\R$ such that $\sum_AJ_{AB}=0$ and identify them with their pullback to M.

We define $P$ 1-forms $\eta_A \; ; A=1,\ldots, P$ and the 2-form $\omega$ as follows:
\begin{align*}
    &\eta_A = -\pder{H}{S_A}d\Sigma_A-F^{fr}_A-\mathcal{J}_AdW^A-J_{AB}d\Gamma^B \\
    &\omega=dq^i\wedge dp_i+dW^k\wedge dN_k+d\Gamma^A\wedge d(S_A-\Sigma_A)
\end{align*}
Notice that, in the definition of $\eta_A$ we do not sum over $A$ in the first and third terms.

Then $(\omega, \eta_1,\cdots,\eta_P)$ is a partially cosymplectic structure of order $P$ over $M$. Consider the isomorphism $\flat$ defined over this structure and the evolution vector field with external forces defined by the relation:
\begin{equation}\label{EvolutionFieldNonSimple}
\flat({\mathcal E}_H)=dH+\sum_A \eta_A-F^{ext}
\end{equation}
Considering local coordinates of $T^*Q$, $(q^i,p_i)$, we may express the right-hand side of the previous equation locally as:
\begin{equation}\label{RightEvolutionFieldNonSimple}
dH+\sum_A \eta_A-F^{ext}=\left(\pder{H}{q^i}-F^{fr}_i-F^{ext}_i\right)dq^i+\pder{H}{p_i}dp_i-\mathcal{J}_kdW^k+\pder{H}{N_k}dN_k+\pder{H}{S_A}d(S_A-\Sigma_A)
\end{equation}
In these local coordinates the isomorphism $\flat$ satisfies:
\begin{align*}
    &\flat\left( \frac{\partial}{\partial q^i}\right)=dp_i-\sum_k F_{k,i}^{fr}\eta_k &\qquad& \flat\left( \frac{\partial}{\partial p_i}\right)=-dp^i \\
    &\flat\left( \frac{\partial}{\partial W^k}\right)=dN_k-\mathcal{J}_k\eta_k &\qquad& \flat\left( \frac{\partial}{\partial N_k}\right)=-dW^k\\
    &\flat\left( \frac{\partial}{\partial \Gamma^A}\right)=d(S_A-\Sigma_A)-\sum_kJ_{kA}\eta_k &\qquad& \flat\left( \frac{\partial}{\partial S_A}\right)=-d\Gamma^A\\
   &\flat\left( \frac{\partial}{\partial \Sigma_A}\right)=d\Gamma^A-\frac{\partial H}{\partial S_A}\eta_A  
\end{align*}
Setting:
$$
{\mathcal E}_H=A^i\pder{}{q^i}+B_i\pder{}{p_i}+C^k\pder{}{W^k}+D_k\pder{}{N_k}+E^A\pder{}{\Gamma^A}+F_A\pder{}{S_A}+G_A\pder{}{\Sigma_A}
$$
and using the linearity of $\flat$, we have that:
\begin{multline}\label{LeftEvolutionFieldNonSimple}
\flat({\mathcal E}_H)=A^idp_i-B_idq^i+C^kdN_k-D_kdW^k+E^Ad(S_A-\Sigma_A)+ \\
+(G_A-F_A)d\Gamma^A-\sum_k\left(A^iF_{k,i}^{fr}+C^k\mathcal{J}_k+E^AJ_{kA}+G_k\pder{H}{S_k}\right)\eta_k
\end{multline}
Since $\{dq^i,dp_i,dW^k,dN_k,d\Gamma^A,d(S_A-\Sigma_A),\eta_A\}$ form a basis of $T^*M$ at each point of an open subset of $M$ we can equal the coefficients of these 1-forms at both sides of equation (\ref{EvolutionFieldNonSimple}) and conclude the following result:
\begin{prop}
Every integral curve of ${\mathcal E}_H$, $\sigma(t)=(q(t),p(t),W(t),N(t),\Gamma(t),S(t),\Sigma(t))$, is a solution of the equations:
\begin{align}
     & \frac{d q^i}{dt} = \frac{\partial H}{\partial p_i}\\
    &\frac{d p_i}{dt} = -\frac{\partial H}{\partial q^i}+\sum_k F_{k,i}^{fr}+\sum_k F^{ext}_{k,i}\\
    &\frac{d W^k}{dt} = \frac{\partial H}{\partial N_k}\\
    &\frac{d N_k}{dt} = \mathcal{J}_k\\
    & \frac{d \Gamma^A}{dt} = \frac{\partial H}{\partial S_A}\\
    & \frac{d S_A}{dt} = \frac{d \Sigma_A}{dt}\\
    &\frac{d S_k}{dt} = -\frac{1}{\frac{\partial H}{\partial S}}\left(\frac{\partial H}{\partial p_j}F^{fr}_{k,j}+\mathcal{J}_k\pder{H}{N_k}+\pder{H}{S_A}J_{kA}\right)
\end{align}
\end{prop}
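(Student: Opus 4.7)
The plan is to work in local coordinates, decompose both sides of the defining relation $\flat(\mathcal{E}_H)=dH+\sum_A\eta_A-F^{ext}$ in the coframe $\{dq^i,dp_i,dW^k,dN_k,d\Gamma^A,d(S_A-\Sigma_A),\eta_A\}$, and read off each equation in the proposition from the corresponding coefficient. The computation of the left-hand side is essentially already contained in (\ref{LeftEvolutionFieldNonSimple}): write $\mathcal{E}_H=A^i\pder{}{q^i}+B_i\pder{}{p_i}+C^k\pder{}{W^k}+D_k\pder{}{N_k}+E^A\pder{}{\Gamma^A}+F_A\pder{}{S_A}+G_A\pder{}{\Sigma_A}$, apply linearity of $\flat$, and use the explicit formulas for $\flat$ on the coordinate vector fields that precede the proposition.

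For the right-hand side, I would expand $\sum_A\eta_A$ via its definition; the terms $\sum_{A,B}J_{AB}d\Gamma^B$ collapse to zero thanks to $\sum_A J_{AB}=0$, while $\sum_A F_A^{fr}=F^{fr}$ and $\sum_A\mathcal{J}_A dW^A=\mathcal{J}_k dW^k$ by construction, leaving the expression (\ref{RightEvolutionFieldNonSimple}). Crucially, the right-hand side then has no $\eta_A$ component in the chosen coframe. Matching coefficients of $dp_i,dq^i,dN_k,dW^k,d(S_A-\Sigma_A)$ immediately yields $A^i=\pder{H}{p_i}$, $B_i=-\pder{H}{q^i}+\sum_k F_{k,i}^{fr}+\sum_k F_{k,i}^{ext}$, $C^k=\pder{H}{N_k}$, $D_k=\mathcal{J}_k$, $E^A=\pder{H}{S_A}$, which are the first five equations of the proposition. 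The vanishing coefficient of $d\Gamma^A$ gives $G_A=F_A$, i.e.\ $\frac{dS_A}{dt}=\frac{d\Sigma_A}{dt}$. Finally, setting the coefficient of $\eta_k$ in (\ref{LeftEvolutionFieldNonSimple}) to zero and substituting the values just found (together with $G_k=F_k=\frac{dS_k}{dt}$) solves for $\frac{dS_k}{dt}$ and produces the entropy-production equation.

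The main step requiring care is justifying that $\{dq^i,dp_i,dW^k,dN_k,d\Gamma^A,d(S_A-\Sigma_A),\eta_A\}$ is actually a local coframe, so that coefficient matching is legitimate. This reduces to invertibility of the transition matrix from the standard coframe $\{dq^i,dp_i,dW^k,dN_k,d\Gamma^A,dS_A,d\Sigma_A\}$; by inspecting the explicit form of $\eta_A$, the matrix is block triangular with identity block on the $\{q,p,W,N,\Gamma\}$ part and a $2P\times 2P$ block on the $\{S_A,\Sigma_A\}\to\{d(S_A-\Sigma_A),\eta_A\}$ part whose determinant equals, up to sign, $\prod_A\pder{H}{S_A}$. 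Hence the proposed coframe is valid precisely on the open dense set where every subsystem temperature is nonzero, which is the natural physical domain of the formalism.
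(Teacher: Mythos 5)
Your proposal is correct and follows essentially the same coefficient-matching argument as the paper: expand both sides of $\flat(\mathcal{E}_H)=dH+\sum_A\eta_A-F^{ext}$ in the coframe $\{dq^i,dp_i,dW^k,dN_k,d\Gamma^A,d(S_A-\Sigma_A),\eta_A\}$ using (\ref{RightEvolutionFieldNonSimple}) and (\ref{LeftEvolutionFieldNonSimple}), and equate coefficients. Your additional verification that this collection is genuinely a local coframe (block-triangular transition matrix with determinant $\pm\prod_A \frac{\partial H}{\partial S_A}$, hence valid exactly where all subsystem temperatures are nonzero) is a point the paper only asserts, and it correctly pins down the open subset on which the conclusion holds.
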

In particular, denoting $\displaystyle{T^A=\pder{H}{S_A}}$ and $\displaystyle{\mu^k=\pder{H}{N_k}}$ and taking into account that $\sum_AJ_{kA}T^k=0$, we have the following:
\begin{equation} \label{EntropyNonSimple}
    -T^k \frac{d S_k}{dt}=\frac{d q^j}{dt}F^{fr}_{k,j}+\sum_AJ_{kA}(T^A-T^k)+\mathcal{J}_k\mu^k
\end{equation}

As in previous sections, if we consider a regular Lagrangian function:
$$
L: TQ\times P_2 \times P_4 \longrightarrow \R
$$
or more precisely, its pullback to $D=TQ\times P_1 \times P_2 \times P_3\times P_4 \times P_5$ then, as the Legendre transformation is a local diffeomorphism, due to the properties of the product manifolds, it will be a local diffeomorphism when extended it to an application from $D$ to $M$. 

We define the energy of the Lagrangian as:
$$
E_L=\Delta (L) -L
$$
where $\Delta$ is the Liouville vector field, we can define the Hamiltonian function locally as $H=E_L\circ Leg^{-1}$. Direct computation in local coordinates again shows that:
$$
\frac{\partial H}{\partial S_A}=-\frac{\partial L}{\partial S_A}  \qquad A=1,\ldots, P
$$
We define the 1-forms over $D$ given by $\tilde{F}^{ext}_A=Leg^*F^{ext}_A$ and $\tilde{F}^{fr}_A=Leg^*F^{fr}_A$ as well as the functions in $D$ given by $\tilde{J}_{AB}=J_{AB}\circ Leg$ and $\tilde{\mathcal{J}}_k=\mathcal{J}_k\circ Leg$. Let the 1-forms $\eta_{A,L}$ be:
$$
\eta_{A,L}=\pder{L}{S_A}d\Sigma_A-\tilde{F}^{fr}_A-\tilde{\mathcal{J}}_AdW^A-\tilde{J}_{AB}d\Gamma^B
$$
Note that, as in the definition of $\eta_A$, we do not sum over $A$ in the first and third term of the right-hand side.
We denote by
$$
\Omega_L=-d\lambda_L+dW^k\wedge dN_k+d\Gamma^A\wedge d(S_A-\Sigma_A)
$$
where $\lambda_L = S^*(dL)$ is considered now as a 1-form on $D$. Then we have $Leg^*\omega= \Omega_L$ and $Leg^*\eta_A= \eta_{A,L}$. Thus, it is straightforward that $(\Omega_L,\eta_{1,L},\cdots,\eta_{P,L})$ is a partially cosymplectic structure of order $P$ over $D$. Let ${\mathcal E}_L$ be the evolution vector field of $E_L$ subject to the external forces $\tilde{F}^{ext}$, defined by the relation:
$$
\flat({\mathcal E}_L)=dE_L+\sum_A\eta_{A,L}-\tilde{F}^{ext}
$$
where $\tilde{F}^{ext}=\sum_A\tilde{F}^{ext}_A$.
\begin{prop}
    If $L$ is hyperregular, that is, if the Legendre transformation $Leg$ is a global diffeomorphism, we can globally define $H$ on $M$ and:
    $$
    {\mathcal E}_H=TLeg \circ {\mathcal E}_L \circ Leg^{-1}
    $$
\end{prop}
\begin{proof}
Notice that under these hypotheses, $Leg$ is a cosymplectomorphism of order $P$.
\end{proof}
\begin{corollary}
    Suppose $L$ is hyperregular. Then if $\gamma$ is an integral curve of $\xi_L$, $\sigma=Leg^{-1} \circ \gamma$ is an integral curve of ${\mathcal E}_H$.    
\end{corollary}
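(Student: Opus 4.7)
The plan is to reduce the corollary to an immediate consequence of the preceding Proposition, which identifies the two evolution vector fields via the Legendre diffeomorphism as ${\mathcal E}_H = TLeg\circ{\mathcal E}_L\circ Leg^{-1}$. In vector-field language, $Leg$ conjugates ${\mathcal E}_L$ to ${\mathcal E}_H$, and under such a conjugacy integral curves on one side correspond to integral curves on the other side simply by applying the diffeomorphism pointwise.

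Concretely, I would proceed in two steps. First, I would invoke the general chain-rule principle: if $\Phi$ is a diffeomorphism with $Y = T\Phi\circ X\circ\Phi^{-1}$, then a curve $\alpha$ satisfies $\dot\alpha(t) = X(\alpha(t))$ if and only if $\beta := \Phi\circ\alpha$ satisfies $\dot\beta(t) = Y(\beta(t))$; this follows directly by differentiating $\beta = \Phi\circ\alpha$ via the chain rule and substituting the conjugacy relation. Second, I would apply this principle with $\Phi = Leg$, $X = {\mathcal E}_L$, and $Y = {\mathcal E}_H$, using that the conjugacy hypothesis is precisely the content of the preceding Proposition. This yields that the curve $\sigma$ paired with $\gamma$ by the Legendre correspondence is an integral curve of ${\mathcal E}_H$.

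I would not expect any genuine obstacle: the substantive geometric work, namely that hyperregularity makes $Leg$ a cosymplectomorphism of order $P$ intertwining the bundle isomorphisms $\flat_L$ and $\flat$ and pulling back each $\eta_A$ to $\eta_{A,L}$ and $\omega$ to $\Omega_L$, has already been absorbed into that Proposition. The corollary is thus a formal translation from an equality of vector fields to a correspondence of integral curves, with the only thing to watch being that one reads $\sigma=Leg^{-1}\circ\gamma$ in the sense dictated by the conjugacy formula, so that the chain-rule computation goes through consistently.
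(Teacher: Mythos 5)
Your proof is correct and matches the paper's (implicit) argument: the corollary is stated without proof as an immediate consequence of the preceding Proposition's conjugacy relation ${\mathcal E}_H = TLeg\circ{\mathcal E}_L\circ Leg^{-1}$, and the standard chain-rule translation from an equality of conjugate vector fields to a correspondence of integral curves is exactly what is intended. You are also right to flag the direction of the map: since $Leg$ sends the Lagrangian side $D$ to $M$ and $\gamma$ lives on the Lagrangian side, the corresponding curve should be $Leg\circ\gamma$ rather than $Leg^{-1}\circ\gamma$ as written in the statement, a typo your reading of the conjugacy formula correctly works around.
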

\begin{corollary}
    $L$ is a hyperregular Lagrangian and $\gamma$ is an integral curve of ${\mathcal E}_L$, it holds that:
    \begin{align} \label{CurveNonSimpleLagrangian}
        &\frac{d}{dt}\left(\pder{L}{\Dot{q}^i}\right)-\pder{L}{q^i}=\sum_k\tilde{F}_{k,i}^{fr}+\sum_k\tilde{F}_{k,i}^{ext} \\
        &\frac{d W^k}{dt} = -\frac{\partial L}{\partial N_k}\\
        &\frac{d N_k}{dt} = \tilde{\mathcal{J}}_k\\
        & \frac{d \Gamma^A}{dt} = -\frac{\partial L}{\partial S_A}\\
        & \frac{d S_A}{dt} = \frac{d \Sigma_A}{dt}\\
        &\frac{\partial L}{\partial S}\frac{d S_k}{dt} =\frac{d q^j}{dt}\tilde{F}^{fr}_{k,j}-\tilde{\mathcal{J}}_k\pder{L}{N_k}-\pder{L}{S_A}J_{kA}
    \end{align}
    where $\tilde{F}_{k,i}^{fr}=F_{k,i}^{fr}\circ Leg$ and $\tilde{F}_{k,i}^{ext}=F_{k,i}^{ext}\circ Leg$. 
\end{corollary}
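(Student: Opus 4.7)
The plan is to obtain the Lagrangian equations by transporting the Hamilton-like equations from the earlier proposition back via $Leg^{-1}$. By the preceding corollary, if $\gamma$ is an integral curve of $\mathcal{E}_L$ then $\sigma=Leg\circ\gamma$ is an integral curve of $\mathcal{E}_H$, so the list of equations proved for $\mathcal{E}_H$ holds along $\sigma$. Since $Leg$ only alters the fiber coordinates $\dot{q}^i\mapsto p_i$ and leaves $q^i, W^k, N_k, \Gamma^A, S_A, \Sigma_A$ untouched, the equations for $dW^k/dt$, $dN_k/dt$, $d\Gamma^A/dt$ and $dS_A/dt = d\Sigma_A/dt$ can be read off directly after substituting the identities $\partial H/\partial N_k = -\partial L/\partial N_k$ and $\partial H/\partial S_A = -\partial L/\partial S_A$ already established in this subsection, together with $\tilde{J}_{AB}=J_{AB}\circ Leg$ and $\tilde{\mathcal{J}}_k=\mathcal{J}_k\circ Leg$.

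The main step is the Euler--Lagrange equation. Starting from $dp_i/dt = -\partial H/\partial q^i + \sum_k F^{fr}_{k,i}+\sum_k F^{ext}_{k,i}$, I would replace $p_i$ along $\sigma$ by $\partial L/\partial\dot{q}^i$ (the local definition of $Leg$), and invoke the standard identity $\partial H/\partial q^i = -\partial L/\partial q^i$ obtained by implicit differentiation of $H(q,p,\cdots)=p_j\dot{q}^j(q,p,\cdots)-L(q,\dot{q}(q,p,\cdots),\cdots)$ with respect to $q^i$ at fixed $(p,W,N,\Gamma,S,\Sigma)$. This yields $\tfrac{d}{dt}\bigl(\partial L/\partial\dot{q}^i\bigr)-\partial L/\partial q^i=\sum_k\tilde{F}^{fr}_{k,i}+\sum_k\tilde{F}^{ext}_{k,i}$, pulling the force terms back through $Leg$.

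For the entropy equation I would start from
\[
-\frac{\partial H}{\partial S_k}\frac{dS_k}{dt}=\frac{\partial H}{\partial p_j}F^{fr}_{k,j}+\mathcal{J}_k\frac{\partial H}{\partial N_k}+\frac{\partial H}{\partial S_A}J_{kA},
\]
which is the non-simple analogue of (\ref{EntropyNonSimple}) along $\sigma$. Using $\partial H/\partial p_j=\dot{q}^j$ (the first Hamilton equation), $\partial H/\partial N_k=-\partial L/\partial N_k$ and $\partial H/\partial S_A=-\partial L/\partial S_A$, and taking the pull-back of each factor, the signs align to give $\partial L/\partial S_k\cdot dS_k/dt = (dq^j/dt)\tilde{F}^{fr}_{k,j}-\tilde{\mathcal{J}}_k\,\partial L/\partial N_k -\partial L/\partial S_A\,J_{kA}$, as stated.

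The only real obstacle is bookkeeping: keeping track of which indices are summed (especially the free index $k$ versus the summed index $A$ inherited from the "do not sum over $A$" convention in the definition of $\eta_A$, which carries over to $\eta_{A,L}$ by the identity $Leg^*\eta_A=\eta_{A,L}$), and being careful that the isomorphism $\flat_L$ associated with the partially cosymplectic structure of order $P$ on $D$ reproduces the same coefficient structure on its own basis. No genuinely new computation is required beyond the Hamiltonian case, because $Leg$ is a cosymplectomorphism of order $P$ and the results of Section~3.2 ensure that evolution vector fields, pulled-back 1-forms and external-force terms transform coherently under $TLeg$.
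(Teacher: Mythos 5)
Your proposal is correct and follows essentially the same route the paper intends: the paper leaves this corollary without an explicit proof, relying on the preceding proposition and corollary (the Legendre transformation is a cosymplectomorphism of order $P$, so integral curves of ${\mathcal E}_L$ map to integral curves of ${\mathcal E}_H$) together with the identities $p_i=\partial L/\partial\dot q^i$, $\partial H/\partial q^i=-\partial L/\partial q^i$, $\partial H/\partial N_k=-\partial L/\partial N_k$ and $\partial H/\partial S_A=-\partial L/\partial S_A$, exactly as you do. Your only deviation is writing $\sigma=Leg\circ\gamma$ rather than the paper's $Leg^{-1}\circ\gamma$, which is in fact the correct direction since $\gamma$ lives on the Lagrangian side $D$ and $Leg:D\to M$.
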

So we can conclude that equations (\ref{CurveNonSimpleLagrangian}) are equivalent to those obtained by Gay-Balmaz and Yoshimura \cite{hiro} for closed non-simple thermodynamical systems.

\subsection{Open Simple Thermodynamic System}

Let us consider, as in \cite{hiro}, an open simple thermodynamic system with only one chemical species and one compartment. We denote by $N$ the number of moles of this species. Consider that this system is in contact with the exterior at several ports, $a=1,\ldots,A$, which allow the flow of matter, and with several heat sources $b=1,\ldots,B$.

Let $Q$ be the manifold describing the mechanical part of our system and $T^*Q$ its cotangent bundle. Let $P_1,P_2,P_3,P_4,P_5$ be defined as in the previous section, taking $P=1$ since we are dealing with a simple system. Let $N$, $M$ be defined as in the previous section as well as the 1-forms $F^{fr},F^{ext}$ which account for the external force and the dissipative force acting on the system. 

We also define the functions $\mathcal{J}^a:N\longrightarrow\R$ and identify them with their pullbacks to $M$. These will be the molar flow rate into the system through the $a$-th port. Similarly, we define $\mu^a, T^a, T^b, J^b_S, S^a:N\longrightarrow\R$ and identify them with their pullbacks. They will respectively represent the chemical potential at the $a$-th port, the temperature at the $a$-th port, the temperature of the $b$-th heat source, the entropy flow rate into the system from the $b$-th heat source and the molar entropy at the $a$-th port. We finally define $\mathcal{J}^a_S=\mathcal{J}^aS^a$ as the entropy flow rate into the system at the $a$-th port.

We define the 1-form $\eta$ and the 2-form $\omega$ as follows:
\begin{align*}
    &\eta = -\pder{H}{S}d\Sigma-F^{fr}-\sum_{a=1}^A\left(\mathcal{J}^adW+\mathcal{J}^a_Sd\Gamma\right)-\sum_{b=1}^BJ^b_Sd\Gamma \\
    &\omega=dq^i\wedge dp_i+dW\wedge dN+d\Gamma\wedge d(S-\Sigma)
\end{align*}
Then we have that $(\omega,\eta)$ is a partially cosymplectic structure on $M$. Consider $\flat:TM\longrightarrow T^*M$ the canonical isomorphism corresponding to this structure. We define the evolution vector field with external forces as the vector field ${\mathcal E}_H$ satisfying:
\begin{equation}\label{CampoEvolucionAbierto}
    \flat({\mathcal E}_H)=dH+\eta-F^{ext}-\left(\sum_{a=1}^A(\mathcal{J}^a\mu^a+\mathcal{J}^a_ST^a )+ \sum_{b=1}^BJ^b_ST^b\right)\eta
\end{equation}

Let $(q^i,p_i)$ be adapted coordinates in $T^*Q$ and consider the local coordinates $(q^i,p_i,W,N,\Gamma,S,\Sigma)$. Then the right-hand side of (\ref{CampoEvolucionAbierto}) may be expressed in local coordinates as:
\begin{multline}\label{RightEvolutionFieldOpen}
    \left(\pder{H}{q^i}-F^{fr}_i-F^{ext}_i\right)dq^i+\pder{H}{p_i}dp_i-\sum_{a=1}^A\mathcal{J}^adW+\pder{H}{N}dN-\\ 
-\left(\sum_{a=1}^A\mathcal{J}^a_S+\sum_{b=1}^BJ^b_S\right)d\Gamma+\pder{H}{S}d(S-\Sigma)-\left(\sum_{a=1}^A(\mathcal{J}^a\mu^a+\mathcal{J}^a_ST^a )+ \sum_{b=1}^B J^b_ST^b\right)\eta
\end{multline}
In these local coordinates, the isomorphism $\flat$ satisfies:
\begin{align*}
    &\flat\left( \frac{\partial}{\partial q^i}\right)=dp_i- F_{i}^{fr}\eta &\qquad& \flat\left( \frac{\partial}{\partial p_i}\right)=-dp^i \\
    &\flat\left( \frac{\partial}{\partial W}\right)=dN-\sum_{a=1}^A\mathcal{J}^a\eta &\qquad& \flat\left( \frac{\partial}{\partial N}\right)=-dW\\
    &\flat\left( \frac{\partial}{\partial \Gamma}\right)=d(S-\Sigma)-\left(\sum_{a=1}^A\mathcal{J}^a_S+\sum_{b=1}^BJ^b_S\right)\eta &\qquad& \flat\left( \frac{\partial}{\partial S}\right)=-d\Gamma\\
   &\flat\left( \frac{\partial}{\partial \Sigma}\right)=d\Gamma-\frac{\partial H}{\partial S}\eta  
\end{align*}
Setting:
$$
{\mathcal E}_H=A^i\pder{}{q^i}+B_i\pder{}{p_i}+C\pder{}{W}+D\pder{}{N}+E\pder{}{\Gamma}+F\pder{}{S}+G\pder{}{\Sigma}
$$
And using the linearity of $\flat$, we have that:
\begin{multline}\label{LeftEvolutionFieldOpen}
\flat({\mathcal E}_H)=A^idp_i-B_idq^i+CdN-DdW+Ed(S-\Sigma)+ \\
+(G-F)d\Gamma-\left(A^iF_{i}^{fr}+\sum_{a=1}^AC\mathcal{J}^a+\left(\sum_{a=1}^A\mathcal{J}^a_S+\sum_{b=1}^BJ^b_S\right)E+G\pder{H}{S}\right)\eta
\end{multline}
Using that $\{dq^i,dp_i,dN, dW, d(S-\Sigma),d\Gamma,\eta\}$ span a basis at each point of an open subset of $M$ of $T^*_xM$, we can conclude the following result.
\begin{prop}
Every integral curve of ${\mathcal E}_H$, $\sigma(t)=(q(t),p(t),W(t),N(t),\Gamma(t),S(t),\Sigma(t))$, is a solution of the equations:
\begin{align}
     & \frac{d q^i}{dt} = \frac{\partial H}{\partial p_i}\\
    &\frac{d p_i}{dt} = -\frac{\partial H}{\partial q^i}+ F_{i}^{fr}+ F^{ext}_{i}\\
    &\frac{d W}{dt} = \frac{\partial H}{\partial N}\\
    &\frac{d N}{dt} = \sum_{a=1}^A\mathcal{J}^a\\
    & \frac{d \Gamma}{dt} = \frac{\partial H}{\partial S}\\
    & \frac{d S}{dt} = \frac{d \Sigma}{dt}+\left(\sum_{a=1}^A\mathcal{J}^a_S+\sum_{b=1}^BJ^b_S\right)\\
    &-\frac{\partial H}{\partial S}\frac{d \Sigma}{dt} = \frac{dq^i}{dt}F^{fr}_{i}+\sum_{a=1}^A\mathcal{J}^a\frac{d W}{dt}+\left(\sum_{a=1}^A\mathcal{J}^a_S+\sum_{b=1}^BJ^b_S\right)\frac{d \Gamma}{dt}- \\
    &\qquad\qquad\quad \;\,-\left(\sum_{a=1}^A(\mathcal{J}^a\mu^a+\mathcal{J}^a_ST^a )+ \sum_{b=1}^B J^b_ST^b\right) \nonumber
\end{align}
\end{prop}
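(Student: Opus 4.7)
The plan is to follow the same strategy used in Propositions 4.1, 4.3 and 4.5: expand both sides of the defining relation (\ref{CampoEvolucionAbierto}) for $\mathcal{E}_H$ in a common basis of $T^*M$ and equate coefficients. All the needed ingredients are in place just above the statement: the local expression (\ref{RightEvolutionFieldOpen}) for $dH + \eta - F^{ext} - \bigl(\sum_a (\mathcal{J}^a\mu^a + \mathcal{J}_S^a T^a) + \sum_b J_S^b T^b\bigr)\eta$, the action of $\flat$ on each coordinate vector field, and the resulting expression (\ref{LeftEvolutionFieldOpen}) of $\flat(\mathcal{E}_H)$ in terms of the unknown component functions $A^i, B_i, C, D, E, F, G$ of $\mathcal{E}_H$.

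The first step is to invoke linear independence of the family $\{dq^i, dp_i, dN, dW, d(S-\Sigma), d\Gamma, \eta\}$, which is pointwise a basis of the cotangent bundle on the open subset where $\pder{H}{S}$ is non-vanishing. This can be checked directly by writing $d\Sigma = dS - d(S-\Sigma)$ inside $\eta$ and reading off the coefficient of $dS$; non-vanishing of $\pder{H}{S}$ is precisely what makes the partially cosymplectic structure non-degenerate. Equating the coefficients of $dp_i$, $dq^i$, $dN$, $dW$ and $d(S-\Sigma)$ on the two sides yields respectively $A^i = \pder{H}{p_i}$, $B_i = -\pder{H}{q^i} + F_i^{fr} + F_i^{ext}$, $C = \pder{H}{N}$, $D = \sum_a \mathcal{J}^a$ and $E = \pder{H}{S}$. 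The coefficient of $d\Gamma$ gives the single identity $G - F = -\bigl(\sum_a \mathcal{J}_S^a + \sum_b J_S^b\bigr)$. Interpreting the component functions as time derivatives along an integral curve, these six relations already produce the first six equations of the statement, including the entropy-flux equation $\tfrac{dS}{dt} = \tfrac{d\Sigma}{dt} + \bigl(\sum_a \mathcal{J}_S^a + \sum_b J_S^b\bigr)$.

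It remains to match the coefficient of $\eta$. This is the only step requiring some care, since four of the coordinate vector fields ($\partial/\partial q^i$, $\partial/\partial W$, $\partial/\partial\Gamma$, $\partial/\partial\Sigma$) contribute an $\eta$-term through $\flat$, so all four contributions must be assembled before comparing with the $\eta$-coefficient on the right-hand side of (\ref{RightEvolutionFieldOpen}). Once that identity is written out, substituting the already-determined values $A^i = \tfrac{dq^i}{dt}$, $C = \tfrac{dW}{dt}$, $E = \tfrac{d\Gamma}{dt}$ and $G = \tfrac{d\Sigma}{dt}$ into it yields, after a sign flip, exactly the displayed entropy-balance equation for $-\pder{H}{S}\tfrac{d\Sigma}{dt}$.

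The main obstacle is essentially bookkeeping: making sure that no $\eta$-contribution on either side is overlooked, and tracking the two different ways in which $d\Gamma$ and $d(S-\Sigma)$ enter the right-hand side (once from the differential $dH$ and once from the various $\eta$-terms). Everything else reduces to reading off a linear system in a known basis, as in the previous subsections.
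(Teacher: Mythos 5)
Your proposal is correct and follows exactly the paper's own argument: expand both sides of (\ref{CampoEvolucionAbierto}) in the frame $\{dq^i,dp_i,dN,dW,d(S-\Sigma),d\Gamma,\eta\}$, equate coefficients, and then substitute the already-determined components into the $\eta$-coefficient identity to get the entropy-balance equation. Your explicit justification that this family is a coframe precisely where $\pder{H}{S}\neq 0$ (equivalently, where $\omega^{n+2}\wedge\eta\neq 0$) is a detail the paper leaves implicit, but it is not a different route.
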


As in previous sections, if we consider a regular Lagrangian function:
$$
L: TQ\times P_2 \times P_4 \longrightarrow \R
$$
or, more precisely, its pullback to $D=TQ\times P_1 \times P_2 \times P_3\times P_4 \times P_5$, then, as the Legendre transformation is a local diffeomorphism, due to the properties of the product manifolds it will be a local diffeomorphism when extended to an application from $D$ to $M$. 

If we define the energy of the Lagrangian as:
$$
E_L=\Delta (L) -L
$$
where $\Delta$ is the Liouville vector field, then we can define the Hamiltonian function locally as $H=E_L\circ Leg^{-1}$. Direct computation in local coordinates again shows that:
$$
\frac{\partial H}{\partial S}=-\frac{\partial L}{\partial S}
$$
We define the 1-forms over $D$ given by $\tilde{F}^{ext}=Leg^*F^{ext}$ and $\tilde{F}^{fr}=Leg^*F^{fr}$ as well as the functions in $D$ $\tilde{J}_S^b$, $\tilde{\mathcal{J}}^a$, $\tilde{\mathcal{J}}^a_S$, $\tilde{\mu}^a$, $\tilde{T}^a$ and $\tilde{T}^b$ defined as the composition of the respective functions defined on M without a tilde composed with $Leg$. Let the 1-form $\eta_{L}$ be:
$$
\eta_{L}=\pder{L}{S}d\Sigma-\tilde{F}^{fr}-\sum_{a=1}^A(\tilde{\mathcal{J}}^adW+\tilde{\mathcal{J}}^a_Sd\Gamma)-\sum_{b=1}^B\tilde{J}_{S}^bd\Gamma
$$
We denote by
$$
\Omega_L=-d\lambda_L+dW\wedge dN+d\Gamma\wedge d(S-\Sigma)
$$
where $\lambda_L = S^*(dL)$ considered now as a 1-form on $D$. Then we have $Leg^*\omega= \Omega_L$ and $Leg^*\eta= \eta_{L}$. Thus, it is straightforward that $(\Omega_L,\eta_{L})$ is a partially cosymplectic structure over $D$. Let ${\mathcal E}_L$ be the evolution vector field of $E_L$ subject to the external forces $\tilde{F}^{ext}+\left(\sum_{a=1}^A(\tilde{\mathcal{J}}^a\mu^a+\tilde{\mathcal{J}}^a_S\tilde{T}^a )+ \sum_{b=1}^B\tilde{J}^b_S\tilde{T}^b\right)\eta_L$, defined by the relation:
$$
\flat({\mathcal E}_L)=dE_L+\eta_{L}-\tilde{F}^{ext}-\left(\sum_{a=1}^A(\tilde{\mathcal{J}}^a\mu^a+\tilde{\mathcal{J}}^a_S\tilde{T}^a )+ \sum_{b=1}^B\tilde{J}^b_S\tilde{T}^b\right)\eta_L
$$
\begin{prop}
    If $L$ is hyperregular, that is, if the Legendre transformation $Leg$ is a global diffeomorphism, we can globally define $H$ on $M$ and:
    $$
    {\mathcal E}_H=TLeg \circ {\mathcal E}_L \circ Leg^{-1}
    $$
\end{prop}
\begin{proof}
Notice that in this situation, $Leg$ is a cosymplectomorphism subject to external forces.
\end{proof}              
\begin{corollary}
    Suppose $L$ is hyperregular. Then if $\gamma$ is an integral curve of ${\mathcal E}_L$, $\sigma=Leg^{-1} \circ \gamma$ is an integral curve of ${\mathcal E}_H$.    
\end{corollary}
\begin{corollary}
    $L$ is a hyperregular Lagrangian and $\gamma$ is an integral curve of ${\mathcal E}_L$, it holds that:
    \begin{align} \label{CurveOpenLagrangian}
        &\frac{d}{dt}\left(\pder{L}{\Dot{q}^i}\right)-\pder{L}{q^i}=\tilde{F}_{i}^{fr}+\tilde{F}_{i}^{ext} \\
        &\frac{d W}{dt} = -\frac{\partial L}{\partial N}\\
        &\frac{d N}{dt} = \sum_{a=1}^A\tilde{\mathcal{J}}^a\\
        & \frac{d \Gamma}{dt} = -\frac{\partial L}{\partial S}\\
        & \frac{d S}{dt} = \frac{d \Sigma}{dt}+\left(\sum_{a=1}^A\tilde{\mathcal{J}}^a_S+\sum_{b=1}^B\tilde{J}^b_S\right)\\
        &\frac{\partial L}{\partial S}\frac{d \Sigma}{dt} = \frac{dq^i}{dt}\tilde{F}^{fr}_{i}+\sum_{a=1}^A\tilde{\mathcal{J}}^a\frac{d W}{dt}+\left(\sum_{a=1}^A\tilde{\mathcal{J}}^a_S+\sum_{b=1}^B\tilde{J}^b_S\right)\frac{d \Gamma}{dt}-\\
        &\qquad\qquad -\left(\sum_{a=1}^A(\tilde{\mathcal{J}}^a\mu^a+\tilde{\mathcal{J}}^a_S\tilde{T}^a )+ \sum_{b=1}^B \tilde{J}^b_S\tilde{T}^b\right)
    \end{align}
    where $\tilde{F}_{i}^{fr}=F_{i}^{fr}\circ Leg$ and $\tilde{F}_{i}^{ext}=F_{i}^{ext}\circ Leg$. 
\end{corollary}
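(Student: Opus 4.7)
The plan is to deduce the corollary directly from the preceding proposition together with the Hamiltonian description of integral curves already established for open systems. The key observation is that the cosymplectomorphism (with external forces) provided by the Legendre transform means that integral curves of $\mathcal{E}_L$ and $\mathcal{E}_H$ are in bijection via $Leg$, so it suffices to pull back the Hamiltonian equations to the tangent side and use the standard identities $p_i=\partial L/\partial\dot{q}^i$, $\partial H/\partial p_i=\dot{q}^i$, $\partial H/\partial q^i=-\partial L/\partial q^i$, $\partial H/\partial N=-\partial L/\partial N$ and $\partial H/\partial S=-\partial L/\partial S$ coming from $H=E_L\circ Leg^{-1}$.

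First I would invoke the preceding corollary: if $\gamma=(q^i,\dot q^i,W,N,\Gamma,S,\Sigma)(t)$ is an integral curve of $\mathcal{E}_L$, then $\sigma=Leg\circ\gamma=(q^i,p_i,W,N,\Gamma,S,\Sigma)(t)$ is an integral curve of $\mathcal{E}_H$, where $p_i=\partial L/\partial\dot q^i$ along the curve and the components in the factors $P_1,\dots,P_5$ are preserved because $Leg$ is the identity on them. From the open Hamiltonian proposition, $\sigma$ satisfies the seven displayed Hamilton-type equations. The first of these, $\dot q^i=\partial H/\partial p_i$, is tautological under $Leg$ and simply recovers the velocity. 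The second, differentiated via $\frac{d}{dt}(\partial L/\partial\dot q^i)=\dot p_i$, immediately yields the Euler--Lagrange equation with external and friction forces, since $\partial H/\partial q^i=-\partial L/\partial q^i$ and $\tilde F_{i}^{\bullet}=F_{i}^{\bullet}\circ Leg$.

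Next I would translate the remaining equations one by one. The equation $\dot W=\partial H/\partial N$ becomes $\dot W=-\partial L/\partial N$; the equation $\dot N=\sum_a\mathcal J^a$ becomes $\dot N=\sum_a\tilde{\mathcal J}^a$; the equation $\dot\Gamma=\partial H/\partial S$ becomes $\dot\Gamma=-\partial L/\partial S$; and $\dot S=\dot\Sigma+\sum_a\mathcal J^a_S+\sum_b J^b_S$ passes through $Leg$ unchanged, giving the fifth Lagrangian identity with tildes. All substitutions are routine and use only that $Leg$ preserves the factors $P_1,\dots,P_5$ pointwise.

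The main obstacle, and the only step requiring care, is the last equation: the entropy-production identity. I would start from the Hamiltonian form
\begin{equation*}
-\frac{\partial H}{\partial S}\frac{d\Sigma}{dt}=\frac{dq^i}{dt}F^{fr}_{i}+\sum_{a=1}^A\mathcal{J}^a\frac{dW}{dt}+\Bigl(\sum_{a=1}^A\mathcal{J}^a_S+\sum_{b=1}^BJ^b_S\Bigr)\frac{d\Gamma}{dt}-\Bigl(\sum_{a=1}^A(\mathcal J^a\mu^a+\mathcal J^a_ST^a)+\sum_{b=1}^B J^b_ST^b\Bigr),
\end{equation*}
and substitute $\partial H/\partial S=-\partial L/\partial S$ on the left. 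Since $\mu^a,T^a,T^b$ are defined on $N$ and thus pass to tildes under $Leg^{-1}$, while $dW/dt$ and $d\Gamma/dt$ are preserved because the $P_i$ factors are, the right-hand side becomes exactly the asserted Lagrangian expression. The sign flip on the left is precisely what produces the displayed form of the last equation of (\ref{CurveOpenLagrangian}). This completes the derivation, and one remarks that the equivalence with the Gay-Balmaz--Yoshimura open-system equations of \cite{hiro} follows by the same identifications as in the previous subsections.
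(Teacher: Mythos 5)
Your proposal is correct and follows essentially the same route as the paper: the corollary is obtained by transporting integral curves of $\mathcal{E}_L$ to integral curves of $\mathcal{E}_H$ via the Legendre transformation (a cosymplectomorphism with external forces) and substituting the identities $p_i=\pder{L}{\dot q^i}$, $\pder{H}{q^i}=-\pder{L}{q^i}$, $\pder{H}{N}=-\pder{L}{N}$ and $\pder{H}{S}=-\pder{L}{S}$ into the Hamiltonian equations of the preceding proposition. You even silently repair the paper's small slip of writing $\sigma=Leg^{-1}\circ\gamma$ where $Leg\circ\gamma$ is meant.
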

So we can conclude that equations (\ref{CurveOpenLagrangian}) are equivalent to those obtained by Gay-Balmaz and Yoshimura \cite{hiro} for open thermodynamical systems.

\section{Conclusions and further work}

In this paper we have introduced several new geometric structures which are a natural setting to describe a wide variety of thermodynamical systems. Indeed, we are able to obtain the same
evolution equations obtained previously by Gay-Balmaz and Yoshimura \cite{hiro} using variational arguments. This new mathematical description will be used in forthcoming research to discuss several items.

\begin{itemize}

\item Develop a deeper study on these geometries identifying them as $G$-structures.

\item Study their submanifolds trying to obtain notions equivalent to the usual Lagrangian, cosisotropic or isotropic submanifolds in the symplectic and cosymplectic setting.

\item Identify the corresponding almost Poisson brackets associated to these geometric structures, and use them to describe the dynamics. Typically, in the previous literature, one of the most successful methods are based on the introduction of metriplectic structures,
coupling a Poisson and a gradient structure, where the entropy S is now constructed from a Casimir function of the Poisson structure \cite{uno,dos}. We want to go into the relations with our approach.

\item Study the reduction procedure under the existence of a Lie group of symmetries.

\item Obtain a convenient Hamilton-Jacobi theory in these new settings.

\item Develop discretization processes as in the case of symplectic and contact systems \cite{symplectic,mio,brav}.
\end{itemize}

\section*{Acknowledgments}

We acknowledge financial support of the 
{\sl Ministerio de Ciencia, Innovaci\'on y Universidades} (Spain), grants PID2022-137909NB-C21 and RED2022-134301-T.
We also acknowledge financial support from the Severo Ochoa Programme for Centers of Excellence in R\&D.

\phantomsection
\addcontentsline{toc}{section}{References}
\pagestyle{empty}
\bibliographystyle{plain} 
\bibliography{refs} 
\end{document}